\documentclass[11pt]{article}
\usepackage[margin=1.05in]{geometry}
\usepackage{amsmath,amssymb,amsthm}
\usepackage{mathtools}
\usepackage[colorlinks=true,linkcolor=blue,citecolor=blue,urlcolor=blue]{hyperref}
\usepackage{enumitem}
\usepackage{booktabs}
\usepackage{tikz}
\usetikzlibrary{positioning,calc,decorations.pathreplacing,arrows.meta,shapes.geometric,fit,backgrounds}
\definecolor{fncol}{RGB}{0,102,153}
\definecolor{storecol}{RGB}{150,25,60}
\definecolor{fftcol}{RGB}{0,110,72}
\usepackage{framed}
\usepackage{float}

\theoremstyle{plain}
\newtheorem{theorem}{Theorem}
\numberwithin{theorem}{section}
\newtheorem{lemma}[theorem]{Lemma}
\newtheorem{claim}[theorem]{Claim}
\newtheorem{observation}[theorem]{Observation}
\newtheorem{corollary}[theorem]{Corollary}
\theoremstyle{definition}

\newcommand{\Ot}{\widetilde{O}}
\newcommand{\eps}{\varepsilon}
\newcommand{\Ct}{C_{\mathrm{heavy}}}
\newcommand{\cnt}{\operatorname{count}}

\newcommand{\smallmap}{g}
\newcommand{\extendedmap}{G}

\usepackage{xcolor}

\newcommand{\cg}[1]{\operatorname{cong}(#1)}
\newcommand{\cgj}[2]{\operatorname{cong}_{#2}{(#1)}}

\usepackage[ruled,linesnumbered,vlined]{algorithm2e}
\SetKw{Break}{break}

\title{Recovery Beats Storage: Improved Space for Preprocessed 3SUM}

\author{Amir Carmel%
    \thanks{Email: \texttt{amir6423@gmail.com}
    }
    \qquad
  Yakov Kosoburd%
  \thanks{Email: \texttt{yakov.kosoburd@weizmann.ac.il}
    }
    \qquad
  Robert Krauthgamer%
  \thanks{The Harry Weinrebe Professorial Chair of Computer Science.
    Work partially supported by the Israel Science Foundation grant \#1336/23.
    Email: \texttt{robert.krauthgamer@weizmann.ac.il}    
  }\\
  Weizmann Institute of Science
}
\date{}

\begin{document}
\maketitle

\begin{abstract}
The 3SUM problem asks, given sets $A,B,C$ of integers, 
whether there exist $a\in A$ and $b\in B$ whose sum belongs to $C$. 
In the preprocessed variant with unknown $C$, 
one preprocesses sets $A$ and $B$, each of size $n$,
and subsequently answers a query specified by subsets $A'\subseteq A$, $B'\subseteq B$ and a target set $C'$ of size $O(n)$, 
by solving the 3SUM instance $(A',B',C')$.

Kirkpatrick, Kuszmaul, Mathialagan, and Vassilevska Williams [ICALP 2026] 
gave the first algorithm with subquadratic space for this problem, 
achieving $\tilde{O}(n^{3/2+\epsilon})$ query time using $\tilde{O}(n^{2-2\epsilon/3})$ space, for every $\epsilon\in[0,1/2]$.
Their algorithm employs separate mechanisms for heavy and light targets,
and for each heavy target it stores explicitly the list of pairs $(a,b)$ summing to it; 
these lists dominate the space bound.

We present a unified construction that uses a single mechanism for all queries.
Instead of storing these lists of pairs, 
we recover them on demand by leveraging the Fiat--Naor data structure [SICOMP 1999]
to invert the function $(a,b)\mapsto (a+b\bmod p)$. 
This simplification improves the space bound to
$\tilde{O}(n^{\max(2-\epsilon, 11/6-\epsilon/3)})$,
while maintaining the same query time.
Moreover, our construction is the first to achieve subquadratic space 
while supporting adaptively chosen queries. 
\end{abstract}

\section{Introduction}\label{sec:intro}

In the 3SUM problem, the input is sets $A,B,C$ of $n$ integers each, 
and the goal is to determine whether there exists $(a,b,c)\in A\times B\times C$
such that $a+b+c=0$. 
A classical algorithm solves 3SUM in $O(n^{2})$ time, 
and current algorithms improve on this bound only by subpolynomial factors~\cite{BDP,Chan}. 
The problem is central to fine-grained complexity, 
where the 3SUM Hypothesis \cite{GO} asserts that 
no algorithm can solve 3SUM in $O(n^{2-\eps})$ time for fixed $\eps>0$. 
It is one of central conjectures in the field 
and implies conditional quadratic lower bounds for many problems 
in computational geometry, string algorithms, and dynamic data structures; see~\cite{VW} for a survey.

In the preprocessed variant of 3SUM, 
the sets are given as queries, but they are drawn from larger sets of size $n$ 
that are known in advance and can be preprocessed. 
Bansal and Williams~\cite{BW}, who attributed the question to Avrim Blum,
initiated the study of this problem 
and gave a data structure with query time $n^{2}/\operatorname{polylog}(n)$. 
Subsequent work obtained query times that are truly subquadratic  
(i.e., smaller than $n^2$ by polynomial factor) for two variants. 
In the \emph{known-$C$} variant, 
all three sets $A,B,C$ are preprocessed, 
and then a query specifies subsets $A'\subseteq A$, $B'\subseteq B$, and $C'\subseteq C$, 
asking whether there exists $(a,b,c)\in A'\times B'\times C'$ such that
$a+b=c$.
In the \emph{unknown-$C$} variant, only $A$ and $B$ are preprocessed, 
and the query specifies subsets $A'\subseteq A$, $B'\subseteq B$, and 
a set $C'$ of $O(n)$ integers, and asks the same question. 
We focus on the unknown-$C$ variant;
observe that it is at least as hard as the known-$C$ one,
because a data structure for the former is clearly applicable also for the latter.
Throughout, we refer to potential elements of $C'$ as \emph{targets}.

The 3SUM Hypothesis implies that either the preprocessing time or the query time must be at least $n^{2-o(1)}$.
We therefore allow quadratic preprocessing time and study the query time, particularly its tradeoff with the space occupied by the data structure, called preprocessing space. 
Our main result obtains the following tradeoff, 
where $\Ot(f)$ hides logarithmic factors in $f$.

\begin{theorem}\label{thm:main}
For every $\eps\in[0,1/2]$ 
there is a randomized data structure for preprocessed 3SUM with unknown-$C$,
whose preprocessing runs in $\Ot(n^{2})$ time and 
occupies space $\Ot (n^{\max(2-\eps, {11}/{6}-{\eps}/{3} )} )$.

The preprocessing succeeds with high probability, in which case 
every query is answered correctly in worst-case time $\Ot(n^{3/2+\eps})$.
\end{theorem}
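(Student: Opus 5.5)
We will build the structure around hashing modulo a random prime together with Fiat--Naor inversion. In preprocessing, we pick a random prime $p$ of size roughly $n^{1+\delta}$ for a parameter $\delta$ to be tuned. We consider the function $f\colon A\times B\to \mathbb{Z}_p$ given by $f(a,b)=(a+b)\bmod p$. This function has domain size $N=n^2$ and is computable in $O(1)$ time given indices. We then invoke the Fiat--Naor data structure: for any tradeoff $S^3T\approx N^3$ (up to polylog factors and the collision statistic $q$), it stores $S$ words and inverts $f$ on any image point in time $T$. The data structure must return \emph{all} preimages of a point $y$, not just one. We plan to achieve this by letting the inverter enumerate the preimage set, or by applying Fiat--Naor to a refined function $(a,b)\mapsto(f(a,b),h(a,b))$ that makes preimages nearly unique. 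Preimages of residues that are heavy, i.e., have more than $n^{1-\delta}\cdot\mathrm{polylog}$ preimages, we handle separately. With $p$ random, the number of pairs with a fixed residue is about $n^{1-\delta}$ in expectation. However, additive structure may create heavy residues, and we will bound their total mass via a congestion/counting argument, presumably using the $\cg{\cdot}$ notation the paper introduces.

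At query time, given $A',B',C'$, we reduce each target $c\in C'$ modulo $p$. For each $c$ we recover all pairs $(a,b)$ with $a+b\equiv c\pmod p$, filter to $a\in A'$, $b\in B'$, and check $a+b=c$ exactly. False positives from the modulus are handled by the exact check, so correctness is deterministic given the preprocessing. Adaptivity is harmless because the only randomness lies in $p$ and the Fiat--Naor structure, and a successful preprocessing (w.h.p.) guarantees worst-case inversion for every image point.

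For the time bound, recovering the $k_c$ preimages of $c$ costs about $T\cdot k_c$. The total cost is therefore $T\sum_{c\in C'}k_c$, which is $\Ot(T\cdot n\cdot n^{1-\delta})$ for light residues. For heavy residues we store the lists explicitly, or recover them with the same Fiat--Naor machinery after splitting them by an auxiliary hash. Their count is at most $n^{1+\delta}$ times a fraction controlled by the expected number of pairs $(a,b),(a',b')$ colliding modulo $p$. We then choose $\delta$ and the Fiat--Naor tradeoff so that the query time is $\Ot(n^{3/2+\eps})$. The two resulting constraints are the Fiat--Naor space $S\approx N/T^{1/3}$ (in its collision-adjusted form) and the storage needed for heavy/structured residues. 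We expect these to produce the two terms $n^{2-\eps}$ and $n^{11/6-\eps/3}$ in the maximum.

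The main obstacle is the enumeration-of-all-preimages step combined with the collision parameter in Fiat--Naor. Its space bound degrades with $q$, the probability that a random image has many preimages, and additive structure in $A+B$ can make $q$ large. We will first try to split $A\times B$ into blocks, e.g., by a random hash of $a$ into $n^{\eps}$ buckets, and apply Fiat--Naor per block to functions with small preimage sets. We will then balance the per-block space against the number of blocks queried, and that balancing calculation is where we expect the exponent $11/6-\eps/3$ to emerge. Finally, preprocessing time is $\Ot(n^2)$ since Fiat--Naor construction evaluates $f$ on $\Ot(N)$ points.
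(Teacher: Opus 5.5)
Your proposal has a genuine gap in its treatment of heavy targets. You decide each $c\in C'$ by enumerating the pairs of its residue class and filtering them against $A'\times B'$. That class contains all $\cnt(c)$ exact representations of $c$, and these can be numerous for many targets at once. Take $A=B=\{1,\dots,n\}$ and let $C'$ be the sums in $[n/2,3n/2]$: each of the $\Theta(n)$ targets has $\Theta(n)$ representations, and all but a vanishing fraction of the $n^2$ pairs have heavy sums. Neither of your fallbacks survives this example:
\begin{itemize}
\item Storing the lists costs $\Theta(n^2)$ space. No congestion argument can bound the heavy mass, since here it is essentially everything.
\item Even with the lists stored, checking them against $A'\times B'$ costs $\sum_{c\in C'}\cnt(c)=\Theta(n^2)$ query time.
\item Recovering them via Fiat--Naor after an auxiliary hash costs at least as much, since every pair must be output.
\end{itemize}
Your inversion structures are built before $A',B'$ are known, so your scheme cannot enumerate only the representations lying in $A'\times B'$. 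Enumeration therefore cannot handle heavy targets within $n^{3/2+\eps}$ time for $\eps<1/2$.

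The missing idea is to count rather than enumerate. At query time, one FFT of length $p$ (Lemma~\ref{lem:fft}) gives $\cg{r}$ for all residues in $\Ot(p)$ time. The paper uses heavy threshold $\Ot(n^{\delta})$ and $p\approx n^{2-\delta}$; your $\delta$ is its $1-\delta$.
\begin{itemize}
\item Since $|\Ct|\le n^{2-\delta}/\log^2 n$, with probability $1-O(1/\log n)$ a run is \emph{clean} for $c$: no other heavy value shares its class.
\item In a clean run, $\cg{r}=\cnt_{A'\times B'}(c)+|M_r\cap(A'\times B')|$, where $M_r$ is the set of light-sum pairs of the class.
\item $M_r$ has only $\Ot(n^{\delta})$ elements and is recovered by exactly the machinery you intended for light targets.
\item Hence a heavy $c$ is a yes-instance iff $\cg{r}>|M_r\cap(A'\times B')|$, with nothing stored per target.
\item Storing $m(r)=|M_r|$ also certifies that a recovery is complete. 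This is what makes the answers correct for all, including adaptive, queries.
\end{itemize}

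The exponents then come from a concrete balance that you did not carry out. The space is $\Ot(n^{\max(2-\delta,\,2-t/2)})$: per-residue tables and $\Ct$ versus the Fiat--Naor structures. The time is $\Ot(n^{2-\delta}+n^{1+\delta+t})$: the FFT versus $\Ot(n^{\delta})$ inversions of cost $n^{t}$ per target. Setting $\delta=\tfrac12-\eps,\ t=2\eps$ gives the $2-\eps$ branch, and $\delta=\tfrac16+\tfrac\eps3,\ t=2\delta$ gives the $\tfrac{11}{6}-\tfrac{\eps}{3}$ branch.

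The Fiat--Naor bound you need is the collision-sensitive $\Ot(\sqrt{N^{3}q/T})$, not $N/T^{1/3}$. To make it work, heavy-sum pairs must be diverted away from residue labels, and the aggregated loads must satisfy $\sum_{i,r}d_{i,r}^{2}=\Ot(n^{2})$.

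To obtain all preimages, the partition of $A$ must have $\Ot(n^{\delta})$ parts, roughly the class size, not $n^{\eps}$. It is repeated $\Theta(\log n)$ times so that each first coordinate is isolated in some partition. Each recovered $a$ is then expanded through a stored residue array $B_{(r-a)\bmod p}$.
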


The two endpoints of the range recover known bounds. 
At $\eps=0$, the query time is $\Ot(n^{3/2})$ and the space is $\Ot(n^{2})$, 
which was previously known~\cite{KPS,KKMV26}. 
At $\eps=1/2$, the query time is $\Ot(n^{2})$ and the space bound is $\Ot(n^{5/3})$,
which matches~\cite{KKMV26}, although space $O(n)$ trivially suffices.
Between the two endpoints, however, 
our space bound strictly improves on the best prior bound~\cite{KKMV26}. 
The largest improvement occurs at $\eps=1/4$,
where our algorithm uses $\Ot(n^{7/4})$ space, 
compared with $\Ot(n^{11/6})$ space of~\cite{KKMV26}, 
both with query time $\Ot(n^{7/4})$.

Besides improving the known space and time bounds, 
our construction also provides a stronger type of guarantee: 
Conditioned on successful preprocessing, 
the data structure answers correctly all possible queries (simultaneously), 
akin to the "forall" (rather than "foreach") guarantee
in sparse recovery and sketching algorithms.
It can thus be used in scenarios where an adversary chooses queries adaptively, 
or based on the data structure (i.e., as a function of the preprocessing phase).
In contrast, prior algorithms can only handle an oblivious adversary
(see~\cite{KKMV26}), 
i.e., for every fixed query the answer is correct with high probability. 

\paragraph{Prior work.}
Chan and Lewenstein~\cite{CL} achieved the first truly subquadratic query time
via additive combinatorics 
(subsequently extended to real inputs~\cite{DBLP:conf/soda/Fischer25}). 
Chan, Vassilevska Williams, and Xu~\cite{CVX} later improved the
query-time and space bounds. 
Kasliwal, Polak, and Sharma~\cite{KPS} obtained significant improvements 
for both variants through a much simpler scheme that uses 
the fast Fourier transform (FFT) for certain counting modulo a random prime.

In the unknown-$C$ variant, however, 
all data structures with subquadratic query time continued to use quadratic space, 
essentially by storing information about the full sumset $A+B$. 
This barrier was broken by~\cite{KKMV26}, who gave, for every $\eps\in[0,1/2]$, 
a data structure with $\Ot(n^{3/2+\eps})$ query time and $\Ot(n^{2-2\eps/3})$ space. 
We summarize the known bounds and tradeoffs between space and query-time
in Figure~\ref{fig:curves}. 

\begin{figure}[H]
\centering
\begin{minipage}[c]{0.41\textwidth}
\centering
\begin{tikzpicture}[scale=0.68]
\draw[->] (0,-0.2) -- (0,5.75);
\draw[->] (-0.2,0) -- (7.6,0);
\node[below,font=\small] at (3.5,-0.85) {query-time exponent};
\node[rotate=90,anchor=south,font=\small] at (-1.3,3.0) {space exponent};
\foreach \y/\lab in {4.8/2, 1.8/{7/4}, 0.8/{5/3}}{
  \draw (0.07,\y) -- (-0.07,\y); \node[left,font=\small] at (-0.1,\y) {$\lab$};}
\foreach \x/\lab in {0/{3/2}, 3.5/{7/4},  7/2}{
  \draw (\x,0.07) -- (\x,-0.07); \node[below,font=\small] at (\x,-0.12) {$\lab$};}
\fill[red!75!black] (0,4.8) circle (3.5pt);
\node[anchor=south west,font=\footnotesize] at (0.12,5.0) {\cite{KPS}};
\fill[red!75!black] (4.667,4.8) circle (3.5pt);
\node[anchor=south east,font=\footnotesize] at (4.78,5.0) {\cite{CVX}};
\fill[red!75!black] (5.6,4.8) circle (3.5pt);
\node[anchor=south west,font=\footnotesize] at (5.72,5.0) {\cite{CL}};
\draw[very thick, dashed] (0,4.8) -- (7,0.8);
\node[font=\small,rotate=-29.7,anchor=south] at (2.4,3.46) {\cite{KKMV26}};
\draw[very thick] (0,4.8) -- (3.5,1.8) -- (7,0.8);
\node[font=\small,rotate=-40.6,anchor=north] at (2.3,2.79) {ours};
\draw[dotted] (3.5,0) -- (3.5,1.8);
\end{tikzpicture}
\end{minipage}\hfill
\begin{minipage}[c]{0.57\textwidth}
\centering\small
\setlength{\tabcolsep}{3.5pt}
\begin{tabular}{lll@{\qquad}l}
\toprule
query & space & \multicolumn{2}{l}{unknown $C$} \\
\midrule
$n^{2}/\operatorname{polylog}(n)$ & $n^{2}$ & no & \cite{BW} \\
$n^{13/7}$ & $n^{13/7}$ & no & \cite{CL} \\
$n^{1.891}$ & $n^{1.891}$ & no & \cite{CVX} \\
$n^{3/2}$ & $n^{3/2}$ & no & \cite{KPS} \\
\midrule
$n^{1.9}$ & $n^{2}$ & yes & \cite{CL} \\
$n^{11/6}$ & $n^{2}$ & yes & \cite{CVX} \\
$n^{3/2}$ & $n^{2}$ & yes & \cite{KPS} \\
$n^{3/2+\eps}$ & $n^{2-2\eps/3}$ & yes & \cite{KKMV26} \\
$n^{3/2+\eps}$ & $n^{\max(2-\eps,\,11/6-\eps/3)}$ & yes & Theorem~\ref{thm:main} \\
\bottomrule
\end{tabular}
\end{minipage}
\caption{Left: The tradeoff between exponents of space and of query-time, 
for the unknown-$C$ variant. 
The circles mark the prior unknown-$C$ results, all at quadratic space. 
The curves meet only at the endpoints. 
Right: Known bounds for Preprocessed 3SUM, 
suppressing $\operatorname{polylog} (n)$ factors except for~\cite{BW}.
All rows take $\Ot(n^{2})$ preprocessing time, which is necessary under the 3SUM Hypothesis.
}
\label{fig:curves}
\end{figure}

\subsection{Technical overview}\label{sec:tech}

Our data structure for preprocessed 3SUM with unknown $C$ combines
two known building blocks: counting modulo a random prime, and
function inversion. We next describe them in detail.

Let $p$ be a random prime drawn at the preprocessing phase. 
Given the query sets $A'$ and $B'$,
a single FFT computation of order $p$ produces, for all residues $r\in[p]$ at once, 
the counts 
\begin{align*}
  \cg{r} 
  &:= \bigl|\big\{ (a,b)\in A'\times B' \ :\ a+b = r \pmod p \big\}\bigr|.
\end{align*}
In comparison, the answer for each target $c$ is determined by 
the analogous non-modular count
\begin{align*}
  \cnt_{A'\times B'}(c) 
  &:= \bigl|\big\{ (a,b)\in A'\times B' \ :\ a+b=c \big\}\bigr|,
\end{align*}
namely, the answer for target $c$ is \textsc{yes} iff $\cnt_{A'\times B'}(c)>0$.
This FFT computation can be performed at the query phase,
as it takes $\Ot(p)$ time and $p$ is chosen to be small enough. 
For every target $c$, the value $\cg{c\bmod p}$ is
clearly an overcount of $\cnt_{A'\times B'}(c)$, as it counts the whole
class of $c\bmod p$ within $A'\times B'$, including pairs whose sum differs
from $c$; these pairs are called the \emph{false positives} of $c$. In
expectation, each target has $O(n^{2}/p)$ false positives.
We also define the \emph{class} of residue $r$
to be the set of pairs in $A\times B$ whose sum is congruent to $r$;
observe that it contains the set used above to define $\cg{r}$. 

The second building block is the data structure of Fiat and Naor~\cite{FN}
for function-inversion (abbreviated FN), which was used also in~\cite{KKMV26}.
Given an evaluation procedure for a function $f$ over a finite domain, 
one can build a (randomized) data structure that, when queried for a value $y$, 
reports a preimage of $y$ (if exists) within time bound $T$; the space complexity of
the structure depends on $T$ and on the collision probability of $f$.

Our algorithm follows~\cite{KPS,KKMV26} 
and classifies every target $c$ as \emph{heavy} or \emph{light},
depending on whether the number of pairs in $A\times B$ summing to it exceeds a threshold $\Ot(n^{\delta})$ for a parameter $\delta\in(0,1)$. 
Our preprocessing phase builds a function-inversion data structure for
\[
  g:A\times B\to[p], 
  \qquad 
  g:(a,b)\mapsto (a+b \bmod p),
\]
whose preimages of a residue $r$ are precisely 
all the pairs in $A\times B$ whose sum is congruent to $r$.
Notice that an inversion recovers only a single preimage of this $r$, 
whereas our intended use requires recovering all its preimages,
this is the first challenge we address further below.

Our query algorithm computes the answer, for each target $c\in C'$,
using the recovered preimages for its residue $r=(c\bmod p)$, as follows. 
For a light target $c$, it uses function inversion as mentioned above 
to recover all pairs in $A\times B$ whose sum is congruent to $c$, 
and simply scan them to find a witness, 
namely, a pair in $A'\times B'$ that sums to $c$. 
A recovery that silently misses a preimage could thus lose a witness, 
which is the second challenge we address below.

For a heavy target $c$, 
we split the pairs counted by $\cg{r}$ according to their (true) sum, 
which can be equal to $c$, equal to another heavy value (that is congruent to $c$), 
or equal to a light value (that is congruent to $c$).
That is, 
\begin{equation}
\label{eq:cnt}    
  \cg{r}
  = \cnt_{A'\times B'}(c)
  \;+\;|\{\text{pairs with other heavy sums}\}|
  \;+\;|\{\text{pairs with light sums}\}|.
\end{equation}
The last term is computed by recovering the preimages of $r$ 
and scanning them for pairs that lie in $A'\times B'$ and their (true) sum is light.
Whenever the middle term is known to be zero,
the algorithm can find $\cnt_{A'\times B'}(c)$ from the other terms in \eqref{eq:cnt}, 
and compute the answer for $c$, which is \textsc{yes} iff $\cnt_{A'\times B'}(c)>0$. 
Ensuring that the middle term vanishes is the third challenge discussed below.

We now address the three challenges mentioned above.
First, an FN data structure returns one preimage, 
whereas both uses above require all the preimages.
We therefore partition $A$ at random into $\Ot(n^{\delta})$ sets 
and build a separate inversion structure for each set, 
and furthermore repeat the random partitioning $\Theta(\log n)$ times. 
Since a light sum has $\Ot(n^{\delta})$ preimages, 
for a given preimage $(a,b)$, with high probability 
in at least one of the random partitions this pair is the only preimage in its set, in which case the inversion structure will find it.
Our preprocessing phase stores also the set $B$ bucketed by residue,
which lets the query phase extend a recovered pair $(a,b)$
to all the pairs $(a, \hat{b})$ with $\hat b = b \pmod p$, 
so the inversion structures need only report one pair per set.
One technical point remains: the FN data structure only inverts a self-map 
(the range is the same as the domain), whereas $g$ above maps pairs to residues; 
our construction closes this gap by extending $g$ to a self-map.

A second challenge is that an inversion structure might silently fail to report some pair, 
in which case the last term in \eqref{eq:cnt} would be corrupted. 
Our preprocessing phase thus stores, for every residue $r\in[p]$, 
the number of pairs in this residue class of $r$ whose sum is light, denoted $m(r)$. 
By construction, the recovered set of pairs is a subset of this class, 
and thus equality occurs precisely when its size reaches $m(r)$, 
and only then the recovered set is accepted. 
Since no answer is produced on a bad draw,
we can conclude that every produced answer is correct. 

A third challenge is to ensure that the middle term in~\eqref{eq:cnt} is zero.
In general, it might be non-zero, 
in which case the residue class must contain a second heavy target
(recall that the residue class contains the set defining $\cg{r}$).
For every given target $c$, however, 
an execution is unlikely to put another heavy target in the same class; 
we call such an execution a \emph{clean run} for this target, 
and the preprocessing phase can identify this event. 
A run that is not clean, or whose recovery was not accepted, 
is skipped and the next run is tried. 
We can ensure that with high probability every target has many clean runs 
by executing the above algorithm $\Theta(\log n)$ times with independently chosen primes.

The foregoing discussion of the three challenges establishes correctness,
and it remains to control the space of the inversion structures. 
The size of an FN structure depends on (the query-time bound and)
the collision probability of the inverted function,
that is, on how large are its preimages. 
Here, the preimage of a residue is a whole class, and its size, 
which we call a \emph{load}, is determined by the random prime.
The loads, and with them the sizes of the FN structures, need not be
even, so we bound their total size rather than each one separately.
The preprocessing phase redraws the run when the loads are too large, 
so the space bound holds for the structures that are actually constructed.

The construction draws all its random bits during the preprocessing phase, 
and the query procedure is deterministic. 
In the query phase we cap the work spent on each attempt,
and ensure a worst-case running time, without terminating any attempt that would have succeeded. 
The preprocessing phase succeeds with high probability, 
and upon success, the data structure answers all possible queries correctly. 
The guarantees therefore do not depend on how the queries are chosen, and hold in particular when they are chosen adaptively.

We conclude with a comparison to~\cite{KKMV26}.
There, heavy and light targets are answered by two separate methods: 
Light targets are answered by function inversion,
whereas for heavy targets, the false positives are computed 
in the preprocessing phase and stored as lists, which dominate the space bound.
Our algorithm stores no lists, and a single method answers all targets: 
the pairs of the queried residue class are recovered at query time rather than stored. 
The function inversion itself also changes. 
It is used to answer heavy targets and not only light ones, 
and it inverts the residue map
$(a,b)\mapsto (a+b\bmod p)$ on all of $A\times B$, while~\cite{KKMV26} invert the exact sum $(a,b)\mapsto a+b$. 
In addition, each FN structure in~\cite{KKMV26} is bounded separately, 
through a partition that makes every preimage small, 
we instead bound the aggregated space complexity of all the FN structures.
We illustrate these differences in Figure~\ref{fig:query}.

\begin{figure}[H]
\centering
\begin{minipage}[t]{0.5\textwidth}\centering
{\small (a) The query phase in~\cite{KKMV26}}\par\smallskip
\scalebox{0.92}{
\begin{tikzpicture}[
  font=\small,
  box/.style={draw,rounded corners=2pt,align=center,inner sep=4pt,
              minimum height=7mm},
  nbox/.style={box,draw=black!60,fill=black!4},
  qibox/.style={box,draw=black!55,line width=0.7pt,fill=orange!15,
                font=\footnotesize,minimum width=8.5mm,minimum height=5.5mm,inner sep=1pt},
  fftbox/.style={box,draw=fftcol,line width=0.9pt,fill=fftcol!8},
  note/.style={ellipse,draw=storecol,dashed,line width=0.8pt,fill=storecol!8,
               align=center,font=\footnotesize,inner sep=2pt},
  elab/.style={font=\footnotesize,inner sep=1pt},
  dots/.style={font=\footnotesize,black!70},
  ar/.style={-{Stealth[length=2mm]},thick,black!65},
  arf/.style={-{Stealth[length=1.6mm]},semithick,black!55},
]
\def\gapsplit{10mm}   
\def\gapjog{5mm}      
\def\gapfn{9mm}       
\def\fnspread{7.5mm}  

\node[fftbox,text width=44mm,anchor=north] (fft) at (0,0)
  {use FFT to compute\\ $H\leftarrow A'+B'\bmod p$};

\node[nbox,text width=30mm,anchor=north] (inv) at ($(fft.south)+(-22mm,-\gapsplit)$)
  {\footnotesize invert $(a,b)\mapsto a+b$ at $c$};

\coordinate (fnrow) at ($(inv.south)+(0,-\gapfn)$);
\node[qibox,anchor=north] (q1) at ($(fnrow)+(-\fnspread,0)$) {FN};
\node[dots,anchor=north]  (q2) at ($(fnrow)+(0,-0.7mm)$)     {$\cdots$};
\node[qibox,anchor=north] (q3) at ($(fnrow)+(\fnspread,0)$)  {FN};

\node[nbox,text width=30mm,anchor=north] (ans) at ($(q1.south)+(\fnspread,-\gapfn)$)
  {\footnotesize \textsc{yes} iff some\\[-1pt]
   \footnotesize inversion returned a\\[-1pt]
   \footnotesize pair in $A'\times B'$};

\node[note,text width=30mm,anchor=north] (note) at ($(fft.south)+(23mm,-\gapsplit)$)
  {$\mathit{FP}[c]\leftarrow$ the false\\ positives of $c$\\
   \itshape (prepared at preprocessing)};
\node[nbox,text width=35mm,anchor=north] (hvy) at ($(note.south)+(0,-7mm)$)
  {\footnotesize \textsc{yes} iff\\[1pt]
   \footnotesize$H[c\bmod p]>\bigl|\mathit{FP}[c]\cap(A'{\times}B')\bigr|$};

\draw[ar] (fft.south) -- ++(0,-\gapjog) -| ($(inv.north)+(0,0.4mm)$);
\draw[ar] (fft.south) -- ++(0,-\gapjog) -| ($(note.north)+(0,0.4mm)$);
\node[elab,anchor=south] at ($(fft.south)+(-13mm,-\gapjog)$) {light $c$};
\node[elab,anchor=south] at ($(fft.south)+(14mm,-\gapjog)$)  {heavy $c$};
\foreach \n in {q1,q2,q3}{\draw[arf] (inv.south) -- (\n.north);
                          \draw[arf] (\n.south) -- ($(ans.north)+(0,0.4mm)$);}
\draw[ar] (note.south) -- ($(hvy.north)+(0,0.4mm)$);
\end{tikzpicture}}
\end{minipage}%
\begin{minipage}[t]{0.5\textwidth}\centering
{\small (b) Our query phase}\par\smallskip
\scalebox{0.92}{
\begin{tikzpicture}[
  font=\small,
  box/.style={draw,rounded corners=2pt,align=center,inner sep=4pt,
              minimum height=7mm},
  nbox/.style={box,draw=black!60,fill=black!4},
  qibox/.style={box,draw=black!55,line width=0.7pt,fill=orange!15,
                font=\footnotesize,minimum width=8.5mm,minimum height=5.5mm,inner sep=1pt},
  fftbox/.style={box,draw=fftcol,line width=0.9pt,fill=fftcol!8},
  dots/.style={font=\footnotesize,black!70},
  ar/.style={-{Stealth[length=2mm]},thick,black!65},
  arf/.style={-{Stealth[length=1.6mm]},semithick,black!55},
]
\def\gapsplit{10mm}
\def\gapjog{5mm}
\def\gapfn{9mm}
\def\fnspread{7.5mm}
\node[fftbox,text width=44mm,anchor=north] (fft) at (0,0)
  {use FFT to compute\\ $H\leftarrow A'+B'\bmod p$};
\node[nbox,text width=56mm,anchor=north] (inv) at ($(fft.south)+(0,-\gapsplit)$)
  {$r\leftarrow c\bmod p;$ \ $\mathit{out}\gets\textsc{Recover}(r);$\\[1pt]
   \footnotesize invert $(a,b)\mapsto(a{+}b)\bmod p$ at $r$};
\coordinate (fnrow) at ($(inv.south)+(0,-\gapfn)$);
\node[qibox,anchor=north] (q1) at ($(fnrow)+(-\fnspread,0)$) {FN};
\node[dots,anchor=north]  (q2) at ($(fnrow)+(0,-0.7mm)$)     {$\cdots$};
\node[qibox,anchor=north] (q3) at ($(fnrow)+(\fnspread,0)$)  {FN};
\node[nbox,text width=58mm,anchor=north] (cert) at ($(q1.south)+(\fnspread,-\gapfn)$)
  {find a clean run where $|\mathit{out}|=m(r)$:\\[1pt]
   \footnotesize recover all light-sum pairs of class $r$,\\[1pt]
   \footnotesize filter: $\mathit{out}'\gets\mathit{out}\cap(A'{\times}B')$};
\coordinate (btop) at ($(cert.south)+(0,-\gapsplit)$);
\node[align=center,text width=33mm,anchor=north] (lgt) at ($(btop)+(-22mm,-2mm)$)
  {\textbf{light} $c$: \ \textsc{yes} iff\\[2pt]
   \footnotesize some $(a,b)\in\mathit{out}'$\\[-1pt]
   \footnotesize has $a+b=c$};
\node[align=center,text width=33mm,anchor=north] (hvy) at ($(btop)+(22mm,-2mm)$)
  {\textbf{heavy} $c$: \ \textsc{yes} iff\\[3pt]
   \footnotesize$H[r]-\bigl|\mathit{out}'\bigr|>0$};
\begin{scope}[on background layer]
  \node[fit=(lgt)(hvy),draw=black!60,fill=black!4,rounded corners=2pt,
        inner sep=2mm] (blk) {};
\end{scope}
\coordinate (bmid) at ($(blk.north)!0.5!(blk.south)$);
\draw[black!60]
  (blk.north) -- ($(bmid)+(0,2.0mm)$)
  .. controls ($(bmid)+(4.6mm,4.0mm)$) and ($(bmid)+(4.6mm,-4.0mm)$)
  .. ($(bmid)+(0,-2.0mm)$) -- (blk.south);
\draw[ar] (fft) -- (inv);
\foreach \n in {q1,q2,q3}{\draw[arf] (inv.south) -- (\n.north);
                          \draw[arf] (\n.south) -- ($(cert.north)+(0,0.4mm)$);}
\draw[ar] (cert.south) -- ($(blk.north)+(0,0.4mm)$);
\end{tikzpicture}}
\end{minipage}
\caption{Comparison of the query procedures for a single target $c\in C'$:
the algorithm of~\cite{KKMV26} in~(a) against ours in~(b). 
The former uses two separate data structures for the two target types, whereas the latter handles both with a single one. 
In our algorithm the two types complement each other: 
the recovered pairs used both to find a witness for a light target 
and to find the light false positives in the counting mechanism for a heavy target.
}
\label{fig:query}
\end{figure}

\subsection{Related work}\label{sec:related}

Another related variant is 3SUM-indexing~\cite{DV,GKLP},
where the task is to preprocess sets $A,B$ of size $n$,
and then given as query a single target $c$,
determine whether there exists a pair $(a,b)\in A\times B$ such that $a+b=c$. 
Goldstein, Kopelowitz, Lewenstein, and Porat~\cite{GKLP}
introduced 3SUM-indexing as a basis for conditional lower bounds in data structures, 
and conjectured that sublinear query time requires quadratic space. 
This conjecture was refuted independently by~\cite{KP19} and~\cite{GGH}, 
both using the Fiat--Naor function-inversion scheme~\cite{FN}. 
Later improvements to the function-inversion
tradeoff apply in the regime where the inversion time is large,
comparable to the domain size~\cite{GGPS,DG26}. Unconditional lower
bounds for 3SUM-indexing were proved in~\cite{GGH} and subsequently strengthened, 
including for adaptive data structures~\cite{ChL}. 
Function inversion is also studied as a general
cryptographic data-structure problem; see~\cite{DBLP:conf/tcc/Corrigan-GibbsK19} for an overview of known
constructions, barriers, and open questions.

\section{Preliminaries}\label{sec:prelim}

We work in the word RAM with words of
$O(\log n)$ bits. All input integers are bounded in absolute value by
a polynomial in $n$. 
Write $[n]=\{0,\dots,n-1\}$. 
For integers
$u,v$ and a prime $p$, $u\equiv_{p}v$ denotes congruence mod $p$. For
$c\in\mathbb{Z}$, $\cnt(c)=|\{(a,b)\in A\times B:\ a+b=c\}|$, and for
sets $X\subseteq A$, $Y\subseteq B$,
$\cnt_{X\times Y}(c)=|\{(a,b)\in X\times Y:\ a+b=c\}|$.

The residue-class counts of a sumset are computed by a single FFT.

\begin{lemma}[\cite{KPS}, Lem.~2.1]\label{lem:fft}
Given sets $A,B$ of at most $n$ integers and a modulus $p$, the $p$
residue-class counts of the multiset $A+B\bmod p$ can be computed in
$O(n+p\log p)$ time.
\end{lemma}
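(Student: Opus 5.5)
The statement is Lemma~\ref{lem:fft}: given sets $A,B$ of at most $n$ integers and a modulus $p$, compute all $p$ residue-class counts of the multiset $A+B \bmod p$ in $O(n+p\log p)$ time. The plan is to reduce the computation to a single polynomial multiplication, which FFT performs in $O(p\log p)$ time, and then fold the product back modulo $p$.

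\textbf{Step 1 (indicator vectors).} Build the vector $\alpha\in\mathbb{Z}^{p}$ by setting
\[
\alpha[i]=|\{a\in A: a\equiv_p i\}| .
\]
Build $\beta\in\mathbb{Z}^{p}$ from $B$ in the same way. Reducing each element modulo $p$ costs $O(1)$ in the word RAM, since the integers are polynomially bounded. So this step takes $O(n+p)$ time, including initializing the arrays.

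\textbf{Step 2 (polynomial product).} Form the polynomials
\[
P_A(x)=\sum_i \alpha[i]x^i, \qquad P_B(x)=\sum_j\beta[j]x^j .
\]
Both have degree less than $p$. Compute $Q=P_A P_B$ by FFT in $O(p\log p)$ time. The coefficients are nonnegative integers bounded by $n^2$. Using exact integer convolution (a number-theoretic transform modulo a suitable prime exceeding $n^2$, or floating point with $O(\log n)$-bit precision and rounding) therefore gives exact results within the same time.

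\textbf{Step 3 (fold).} Set
\[
\mathrm{cong}(r)=Q[r]+Q[r+p] \qquad \text{for } r\in[p],
\]
which takes $O(p)$ time. This is correct because
\[
Q[k]=\sum_{i+j=k}\alpha[i]\beta[j],
\]
and every pair $(i,j)\in[p]^2$ has $i+j\in[0,2p-2]$, which is congruent to $r$ exactly when $i+j\in\{r,r+p\}$. Hence the sum equals the number of pairs $(a,b)\in A\times B$ with $a+b\equiv_p r$.

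The total time is $O(n+p\log p)$. The only delicate point is exactness of the FFT arithmetic under the word-RAM model. This is standard, because the coefficients have $O(\log n)$ bits.
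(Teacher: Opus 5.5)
Your proposal is correct and is the standard argument behind this lemma, which the paper does not prove itself but imports from \cite{KPS}: bucket both sets by residue into count vectors of length $p$, multiply the corresponding polynomials by FFT with exact integer arithmetic (the coefficients are at most $n^2$), and fold coefficients $r$ and $r+p$. Two small details are worth stating explicitly: take the residues of negative integers in $[p]$, and read $Q[2p-1]$ as $0$ when $r=p-1$.
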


The next lemma bounds the collision probability of two fixed integers
modulo a random prime.

\begin{lemma}[\cite{KPS}, Lem.~2.2]\label{lem:fp}
Let $u\neq v$ be integers bounded by $n^{O(1)}$ in absolute value, and
let $p$ be a uniformly random prime in $[R,2R]$, for any
$2\le R\le n^{O(1)}$. Then $\Pr[u\equiv_{p}v]=O(\log n/R)$.
\end{lemma}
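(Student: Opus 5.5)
The claim is Lemma~\ref{lem:fp}: for fixed integers $u\neq v$ with $|u|,|v|\le n^{O(1)}$ and a uniformly random prime $p\in[R,2R]$, we have $\Pr[u\equiv_p v]=O(\log n/R)$. The plan is to count primes.

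First, reduce to a divisibility question. Set $d=u-v$. Then $d\neq 0$ and $|d|\le n^{O(1)}$, and $u\equiv_p v$ holds exactly when $p\mid d$. So the probability equals
\[
\frac{|\{\text{primes } p\in[R,2R] : p\mid d\}|}{|\{\text{primes in } [R,2R]\}|}.
\]

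Next, bound the numerator. Every prime in $[R,2R]$ is at least $R\ge 2$. If $k$ distinct such primes divide $d$, then their product, which is at least $R^k$, divides $d$, so $R^k\le |d|\le n^{c}$ for some constant $c$. Hence $k\le c\log n/\log R$.

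Then bound the denominator. By the prime number theorem, or Chebyshev's bound together with Bertrand's postulate, the interval $[R,2R]$ contains $\Omega(R/\log R)$ primes for every $R\ge 2$. The hidden constant is uniform in $R$; for small $R$, Bertrand's postulate guarantees at least one prime, so the bound still holds up to a constant.

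Finally, divide the two bounds:
\[
\frac{c\log n/\log R}{\Omega(R/\log R)} = O\!\left(\frac{\log n}{R}\right).
\]
The $\log R$ factors cancel exactly, which is why the statement has no dependence on $\log R$. The only step needing care is this uniform lower bound on the prime count in $[R,2R]$; a standard Chebyshev-type estimate supplies it, and the rest is elementary.
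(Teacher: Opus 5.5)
Your proof is correct and matches the standard argument behind this lemma, which the paper does not prove but imports from \cite{KPS}: $u-v\neq 0$ has at most $O(\log n/\log R)$ prime divisors in $[R,2R]$, the interval contains $\Omega(R/\log R)$ primes uniformly in $R\ge 2$, and the $\log R$ factors cancel. The one step to state carefully is that uniform prime count. It follows from the prime number theorem, from Chebyshev's estimates with his sharp constants (the crude $\Theta(x/\log x)$ bounds alone need not give a positive difference $\pi(2R)-\pi(R)$), or from Erd\H{o}s's proof of Bertrand's postulate, with Bertrand handling bounded $R$.
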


We utilize the classic data structure for function inversion by Fiat and Naor~\cite{FN}.

\begin{lemma}[Fiat--Naor~\cite{FN}]\label{lem:fn}
Let $f:[N]\to[N]$ be a self-map. For any $T\le N$ there is a randomized construction producing a data structure of size $S=\Ot\bigl(\sqrt{N^{3}q(f)/T}\bigr)$ in
$\Ot(N+S)$ time, such that w.h.p over the construction randomness, on every query $y$ returns \emph{some} $x\in f^{-1}(y)$
(or $\bot$ if $f^{-1}(y) = \emptyset$) in $\Ot(T)$ time. Here
$q(f)=\Pr_{x,x'}[f(x)=f(x')]$ is the collision probability of $f$. 
\end{lemma}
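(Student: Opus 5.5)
The bound in Lemma~\ref{lem:fn} comes from the Hellman time--memory tradeoff, in the form Fiat and Naor made rigorous for arbitrary functions. I would follow their construction with the parameters tuned to the collision probability $q=q(f)$. The plan is to choose a chain length $t$, a number of chains per table $\ell$, and a number of tables $m$ satisfying the three relations
\[
\ell t^{2}\approx 1/(qN)\cdot N = 1/q, \qquad m\ell t\approx \Ot(N), \qquad mt\approx T .
\]
Together these give $t=\sqrt{T/(Nq)}$, space $S=m\ell=\Ot(N/t)=\Ot\bigl(\sqrt{N^{3}q/T}\bigr)$ and query time $\Ot(mt)=\Ot(T)$. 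One must also handle the degenerate regime $t<1$ by storing the full inverse table. Preprocessing evaluates $\Ot(m\ell t)=\Ot(N)$ chain steps plus $\Ot(S)$ writes.

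Concretely, table $i$ uses a re-randomized map $f_i(x)=f(h_i(x))$, where $h_i:[N]\to[N]$ is drawn from a $k$-wise independent family with $k=\operatorname{polylog} N$. This keeps the description of each $h_i$ small and the evaluation time polylogarithmic. Table $i$ picks $\ell$ random start points and iterates $f_i$ for $t$ steps from each, storing only (endpoint $\mapsto$ start) pairs in a dictionary.

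A query $y$ proceeds in each table by iterating $f_i$ starting from $h_i$ applied to $y$, at most $t$ times. Whenever it hits a stored endpoint, it walks the corresponding chain from its start to locate an $x$ with $f(x)=y$ and verifies it; false alarms cost $O(t)$ each and are bounded in expectation. If no table succeeds, the query returns $\bot$; this answer is correct whenever coverage (below) holds, and preimages returned are always verified, so the structure never reports a wrong preimage.

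The key step is the coverage argument. I would show that a single table of $\ell$ chains of length $t$ covers $\Omega(\ell t)$ distinct points, provided chains rarely merge. A merge is exactly a collision of $f_i$ on two visited points, so the expected number of merging pairs is about $(\ell t)^{2}q$ per table, which is at most $\ell t$ by the choice $\ell t^{2}\le 1/q$. Then $m=\Ot(N/(\ell t))$ independent tables cover every fixed image w.h.p.

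The main obstacle is that this heuristic fails for images $y$ with very large preimages: chains are drawn into such $y$ disproportionately, and independence across tables is only $k$-wise. Following Fiat--Naor, I would first sample $\Ot(S)$ random domain points and store explicitly, in a separate dictionary $D$, the images that appear too frequently together with one preimage each. Chains are then made to avoid $D$, for instance by stopping a chain when it hits $D$. With the heavy images removed, the merge analysis uses only the bounded residual collision mass. This mass is at most $q$, which is exactly where $q(f)$ rather than $1/N$ enters the bound.

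Finally, the ``w.h.p.\ for every query $y$'' guarantee follows from two further steps. Per-$y$ failure probability is made $N^{-\omega(1)}$ using the $k$-wise independent tail bounds for each image over the $m$ tables, with $O(\log N)$-fold repetition. A union bound over all $N$ possible images then completes the argument.
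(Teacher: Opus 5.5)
The paper does not prove Lemma~\ref{lem:fn}; it imports it from~\cite{FN}, so the real comparison is with the Fiat--Naor construction. Your skeleton (re-randomized Hellman chains with $\ell t^{2}q\approx 1$, $m\ell t\approx N$, $mt\approx T$) and its arithmetic are fine. The heavy-image step, however, fails as you instantiate it.

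\textbf{Stopping chains at $D$ destroys coverage.} Stopping a chain when it enters $D$ cuts its expected length to about $1/\mu(D)$, where $\mu(D)=\Pr_x[f(x)\in D]$, and nothing keeps $\mu(D)\lesssim 1/t$. Take $f$ mapping half of $[N]$ $\tau$-to-one and the other half injectively, with $\tau=\sqrt{T}$. Then $q\approx\tau/(2N)$ and the target space is about $NT^{-1/4}$. Each of the $N/(2\tau)$ heavy images is seen about $T^{1/4}$ times among your $\Ot(S)$ samples, and they all fit in $D$, so $\mu(D)=1/2$. Now every chain dies after $O(1)$ steps, and a table yields $O(\ell)$ fresh evaluations instead of $\Omega(\ell t)$. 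Hitting the $N/2$ singleton images then requires $\Omega(N)$ stored chains, far above the target $NT^{-1/4}$.

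The motivation for this step is also misplaced. At the level of your analysis, heavy images are already charged to the merge count $\binom{\ell t}{2}q$, so the $q(f)$-dependent bound needs no removal at all. Removal is what turns $TS^{2}=N^{3}q(f)$ into the $q$-free $TS^{3}=N^{3}$. Even there, a chain reaching $D$ must be continued (e.g.\ re-hashed), not terminated. Relatedly, the merge bound you need is $O(\ell)$, not $\ell t$, since a single merge can waste up to $t$ points. Your choice $\ell t^{2}q=O(1)$ does give $(\ell t)^{2}q=O(\ell)$, so this part is only misstated.

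\textbf{The limited-independence analysis is missing.} This is the part Fiat--Naor's paper is actually about. Every probability you compute treats the values of $h_i$ at visited points as fresh independent uniforms: merges $\approx(\ell t)^{2}q$, and a fixed $y$ hit with probability $\Omega(\ell t/N)$ per table. But those points are chosen adaptively by earlier outputs of the same $h_i$, along chains much longer than $k=\operatorname{polylog}N$. $k$-wise independence does not license that computation. Taking $k\approx 2t$ would make it valid, but costs $\Theta(mt)=\Theta(T)$ words of hash descriptions, which can exceed $S$. (Different tables use independently drawn $h_i$, so cross-table independence is not where the difficulty lies, contrary to your remark.)

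\textbf{Query time is only bounded in expectation.} False alarms are controlled in expectation for each fixed $y$, whereas the lemma promises $\Ot(T)$ time on every query. That needs per-table caps on false alarms, plus a concentration argument that some uncapped table still covers $y$.

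Without these pieces, your proposal is Hellman's heuristic analysis plus a pointer to Fiat--Naor for the rigorous core, which is exactly what the paper does by citing~\cite{FN}. If you rely on that citation, drop the chain-stopping modification, which actively breaks the bound.
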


\section{The algorithm}\label{sec:constr}

The data structure is built $J=\Theta(\log n)$ times independently.
Each copy is a \emph{run}, and run $j$ draws its own uniformly random
prime $p_{j}\in[n^{2-\delta},2n^{2-\delta})$. We write $p$ for the
prime of a generic run and omit the run index except where runs
interact. The construction uses three parameters: $\delta\in(0,1)$
(the heavy threshold), $t$ (the inversion time), and
$L:=\lceil 6n^{\delta}\log^{2}n\rceil$ (the partition width). 
For simplicity we assume $L$ divides $n$. 
For $c\in A+B$, we say that $c$ is \emph{heavy} if $\cnt(c)\ge n^{\delta}\log^{2}n$ and \emph{light} otherwise.
The values of $\delta$ and $t$ are
optimized in Section~\ref{sec:space}. 

\subsection{Preprocessing}\label{sec:pre}

On input $(A,B)$, construct the following data structures, shown by nesting level in
Figure~\ref{fig:layout}.

\begin{itemize}[itemsep=2pt]
\item \emph{The heavy set}, $\Ct=\{c \in A+B:\ \cnt(c)\ge n^{\delta}\log^{2}n\}$,
stored as a static dictionary. 

\item \emph{Per run: the residue array of $B$.} 
An array of size $p$ storing the sets $B_{r}:=\{b\in B:\ b\equiv_{p}r\}$ for every $r \in [p]$.

\item \emph{Per run: the pair-count table.} For a residue
$r\in[p]$, let
\[
  M_r=\{(a,b)\in A\times B:\ a+b\equiv_{p}r,\
  a+b\notin\Ct\}
\]
be the set of light-sum pairs of residue class $r$. The table stores
$m(r):=|M_r|$ for every $r\in[p]$. The sets $M_r$
themselves are not stored.

\item \emph{Per run: partitions and FN structures.}
$K=\Theta(\log n)$ independent balanced random partitions of $A$ into
$L$ sets: uniformly random assignments subject to every set having
size $n/L$. For partition $k\in[K]$ and set $i\in[L]$, $A^{(k)}_i$
denotes the $i$-th set. The
function to invert is
\[
  \smallmap^{(k)}_i(a,b)=(a+b)\bmod p,
  \qquad\text{on the light-sum pairs of } A^{(k)}_i\times B ,
\]
whose preimage at $r$ is exactly $M_r\cap(A^{(k)}_i\times B)$.
To apply Lemma~\ref{lem:fn}, $\smallmap^{(k)}_i$ is extended to a self-map
$\extendedmap^{(k)}_i$ of $[2p]$, by diverting heavy-sum pairs and dummies into $\{p,p+1,\ldots,2p-1\}$.

The precise
definition and its properties are given in Section~\ref{sec:anpre}. The structure of Lemma~\ref{lem:fn}, which we
call an \emph{FN structure}, is constructed for $\extendedmap^{(k)}_i$, for every
$k\in[K]$ and $i\in[L]$, with query-time parameter $T=n^{t}$. 

\item \emph{Per run: the heavy-count table.} A static dictionary storing,
for every residue $r$ occupied by a heavy value, the count
$h(r)=|\{c'\in\Ct:\ c'\equiv_{p}r\}|$, with absent entries read as
$0$. A run is \emph{clean} for a target $c$ if the residue class of $c$
contains no heavy value other than possibly $c$ itself, that is, if
$h(c\bmod p)$ is $1$ when $c\in\Ct$ and $0$ otherwise. 
\end{itemize}

After drawing a run, compute the loads
\[
  d^{(k)}_{i,r} := \big| \{(a,b)\in A^{(k)}_i\times B:\ a+b\notin\Ct,\
  a+b\equiv_{p}r\} \big|
\]
and check, for every
partition $k$,
\begin{equation*}\label{eq:certification}
  \textstyle\sum_{i,r} \big(d^{(k)}_{i,r} \big)^{2}\ \le\ 2n^2\log^2 n .  
\end{equation*}
If the check fails for any $k$, redraw the run (prime and partitions). The loop is capped at $\log n$ attempts, and exhausting
them counts as a failed preprocessing.
In what follows, we omit the superscript $k$ whenever it is clear from context.

\begin{figure}[t]
\centering
\definecolor{heavycol}{RGB}{95,130,80}   
\definecolor{arrcol}{RGB}{70,105,120}    
\begin{tikzpicture}[
  x=1cm,y=1cm,
  bx/.style={draw=black!60},
  gh/.style={draw=black!30},
  card/.style={bx,fill=blue!4},
  cardb/.style={gh,fill=blue!4},
  card2/.style={bx,fill=blue!9},
  card2b/.style={gh,fill=blue!9},
  brc/.style={decorate,line width=0.45pt,black!45},
  hdr/.style={font=\scriptsize,anchor=west},
  loop/.style={font=\scriptsize,anchor=south west,inner sep=1pt},
  tbl/.style={draw=black!55,fill=black!8,font=\scriptsize,inner sep=3pt},
  hset/.style={draw=heavycol!75,line width=0.8pt,fill=heavycol!14,
               font=\scriptsize,inner sep=3pt},
  fnb/.style={draw=black!55,fill=orange!15,font=\tiny,inner sep=2pt,
              minimum width=0.55cm,minimum height=0.36cm},
  dots/.style={font=\scriptsize,black!70},
  ar/.style={->,black!55,line width=0.4pt},
  slab/.style={font=\scriptsize}]

\draw[bx,thick] (0,0.55) rectangle (5.95,7.95);
\node[hset,anchor=west] at (0.45,7.60) {heavy set $\Ct$};

\node[loop] at (0.45,6.92) {\textbf{for} $j\in[J]$};
\draw[cardb] (0.75,0.89) rectangle (5.55,6.60);
\draw[cardb] (0.60,1.04) rectangle (5.40,6.75);
\draw[card]  (0.45,1.19) rectangle (5.25,6.90);
\node[hdr] at (0.55,6.65) {run $j=1$:\quad prime $p_{j}$};

\node[loop] at (0.85,5.87) {\textbf{for} $r\in[p]$};
\draw[card2b] (1.15,4.85) rectangle (4.90,5.55);
\draw[card2b] (1.00,5.00) rectangle (4.75,5.70);
\draw[card2]  (0.85,5.15) rectangle (4.60,5.85);
\node[tbl,anchor=west] (tb) at (1.05,5.50) {$B_{r}$};
\node[tbl,right=0.35cm of tb] (tm) {$m(r)$};
\node[tbl,right=0.35cm of tm] {$h(r)$};

\node[loop] at (0.85,4.04) {\textbf{for} $k\in[K]$};
\draw[card2b] (1.15,1.34) rectangle (5.05,3.72);
\draw[card2b] (1.00,1.49) rectangle (4.90,3.87);
\draw[card2]  (0.85,1.64) rectangle (4.75,4.02);
\node[hdr] at (0.92,3.77) {partition $k=1$};

\node[slab,anchor=east] at (1.26,2.77) {$A$};
\draw[bx,fill=arrcol!18] (1.32,2.58) rectangle (4.42,2.96);
\draw[arrcol!55,line width=0.3pt]
  (1.60,2.60)--(1.60,2.94) (1.88,2.60)--(1.88,2.94)
  (2.44,2.60)--(2.44,2.94) (2.72,2.60)--(2.72,2.94)
  (3.86,2.60)--(3.86,2.94) (4.14,2.60)--(4.14,2.94);
\foreach \x in {2.16,3.00,3.58}{\draw[black!85,line width=0.9pt] (\x,2.55)--(\x,2.99);}
\node[dots] at (3.29,2.77) {$\cdots$};
\draw[brc,decoration={brace,amplitude=5pt}] (1.32,3.06) -- (4.42,3.06);
\node[slab] at (2.87,3.38) {$L$ parts};

\draw[ar] (1.74,2.54) -- (1.74,2.26);
\draw[ar] (2.58,2.54) -- (2.58,2.26);
\draw[ar] (4.00,2.54) -- (4.00,2.26);
\node[fnb] at (1.74,2.04) {FN$_1$};
\node[fnb] at (2.58,2.04) {FN$_2$};
\node[dots] at (3.29,2.04) {$\cdots$};
\node[fnb] at (4.00,2.04) {FN$_L$};
\end{tikzpicture}
\caption{The stored data structures by nesting level. $J=\Theta(\log n)$ runs,
$K=\Theta(\log n)$ partitions per run, $L = \Ot(n^\delta)$ FN-structures per
partition.}
\label{fig:layout}
\end{figure}

\SetKwRepeat{RepeatTimes}{repeat $\log n$ times}{until}
\begin{algorithm}[H]
\SetKwInOut{KwIn}{Input}
\SetKwInOut{KwOut}{Output}
\KwIn{Sets $A,B$ of $n$ integers}
\KwOut{The stored objects of Section~\ref{sec:pre}, or \textsc{failure}}
$\Ct\gets\{c\in A+B:\ \cnt(c)\ge n^{\delta}\log^{2}n\}$\;
\For{$j \in [J]$}{
  \RepeatTimes{every partition $k\in [K]$, satisfies $\sum_{i,r}d_{i,r}^{2}\le 2n^{2}\log^{2}n$,
               or \Return{\textsc{failure}}\nllabel{step:certification}}{
    draw a prime $p_{j}\in[n^{2-\delta},2n^{2-\delta})$, and $K$ balanced partitions of $A$ into $L$ sets\;
    compute the loads $d^{(k)}_{i,r}$ for all $k\in[K]$, $i\in[L]$, $r\in[p_{j}]$\;
  }
  \For{$r\in[p_{j}]$}{
    store the residue array $B_{r}$, the pair-count table $m(r)$, the heavy-count table $h_{j}(r)$\;
  }
  construct FN-structure for $\extendedmap^{(k)}_i$, for all $k\in[K]$ and $i\in[L]$, with query time $T=n^{t}$\;
}
\caption{Preprocess$(A,B)$}
\label{alg:preprocess}
\end{algorithm}

\subsection{The subroutine Recover}\label{sec:rec}

$\textsc{Recover}$ is a subroutine of the query procedure, which invokes it once per candidate run of
each target. $\textsc{Recover}(r)$ attempts to output
$M_r$, the light-sum pairs of residue class $r$, within one run. Every execution ends
in one of two verdicts:
\begin{itemize}[itemsep=1pt]
\item \textsc{complete}: the output is certified to be all of
$M_r$,
\item \textsc{failed}: the run yields nothing for this class, either
because the reading cap was reached or because the output falls short
of the stored count $m(r)$.
\end{itemize} Figure~\ref{fig:enum} shows the control flow. The three
steps:
\begin{enumerate}[itemsep=1pt]
\item Query each of the $KL$ FN-structures at the value $r$. Each
returned domain point is verified by evaluating $G^{(k)}_i$ and
discarded unless its value is $r$. Set
\[
  \widehat{A}=\{a\in A:\ \text{some verified returned pair has first coordinate } a\}.
\]
\item For each $a\in\widehat{A}$, read the set $B_{(r-a)\bmod p}$ and
output the pairs $(a,b)$ with $a+b\notin\Ct$. If the total number of
values read exceeds $L$, stop and declare the execution
\textsc{failed}.
\item Declare \textsc{complete} if the number of output pairs equals
$m(r)$, else \textsc{failed}.
\end{enumerate}

\begin{figure}[t]
\centering
\begin{tikzpicture}[
  stp/.style={draw, rounded corners=1pt, align=center, font=\small,
               inner sep=4pt},
  verdict/.style={draw, align=center, font=\small\scshape, inner sep=4pt},
  arr/.style={->, font=\footnotesize}]
\node[stp] (d) at (0,0) {\emph{discover}\\ $KL$ FN-calls at $r$\\
  $\to$ coordinates $\widehat{A}$};
\node[stp, right=30pt of d] (e) 
{\emph{expand}\\ 
for every $a \in \widehat{A}$, read the set $B_{(r-a)\bmod p}$:\\
output the light pairs (at most $L$ reads)};
\node[stp, right=30pt of e] (cc) 
{\emph{count check}\\ compare output size  to $m(r)$};
\node[verdict, below=16pt of cc, xshift=-34pt] (fl) {failed};
\node[verdict, below=16pt of cc, xshift=34pt] (co) {complete};
\draw[arr] (d) -- (e);
\draw[arr] (e) -- (cc);
\draw[arr] (e) -- node[left,yshift=-3pt]{$>L$ reads} (fl);
\draw[arr] (cc) -- node[left,xshift=-2pt]{$\neq$} (fl);
\draw[arr] (cc) -- node[right,xshift=2pt]{$=$} (co);
\end{tikzpicture}
\caption{One execution of $\textsc{Recover}(r)$. Only a
\textsc{complete} execution produces an answer. A \textsc{failed}
execution is skipped and the next run is tried.}
\label{fig:enum}
\end{figure}

\begin{algorithm}[H]
\SetKwInOut{KwIn}{Input}
\SetKwInOut{KwOut}{Output}
\KwIn{Residue $r\in[p]$ 
  \tcp{within one run}}
\KwOut{$(M_{r},\textsc{complete})$, or \textsc{failed}}
$\widehat{A}\gets\emptyset$;\quad $\mathit{out}\gets\emptyset$\;
\tcp{discover}
\For{$k\in[K]$ and $i\in[L]$}{
  $x\gets$ the answer of the FN structure of $\extendedmap^{(k)}_i$, queried at $r$\;
  \lIf{$x\ne\bot$ and $\extendedmap^{(k)}_i(x)=r$}{add the first coordinate of $x$ to $\widehat{A}$}
}
\tcp{expand}
\For{$a\in\widehat{A}$}{
  \For{$b\in B_{(r-a)\bmod p}$}{
    \lIf{more than $L$ values of $b$ were read so far\nllabel{step:cap}}{\Return{\textsc{failed}}}
    \lIf{$a+b\notin\Ct$}{add $(a,b)$ to $\mathit{out}$}
  }
}
\tcp{count check}
\lIf{$|\mathit{out}|=m(r)$}{\Return{$(\mathit{out},\textsc{complete})$}}
\Return{\textsc{failed}}\;
\caption{Recover$(r)$}
\label{alg:Recovery}
\end{algorithm}

\subsection{Query}\label{sec:query}

Compute, for each run $j$ and all residues $r$ at once, the counts
$\cgj{r}{j} := |\{(a,b)\in A'\times B':\ a+b\equiv_{p_{j}}r\}|$,
by Lemma~\ref{lem:fft}; we omit the run index $j$ whenever it is clear from
context. Each target $c\in C'$ is then
processed independently, as follows. 
Its candidate runs are the runs that are clean for it.
For a candidate run $j$, compute
$r=c\bmod p_{j}$ and execute that run's
$\textsc{Recover}(r)$. If the verdict is \textsc{failed}, the next candidate run is
tried. If the verdict is
\textsc{complete}, the target is answered and its processing ends:
\begin{itemize}[itemsep=1pt]
\item light $c$: answer yes iff some output pair has $a+b=c$ and lies in
$A'\times B'$,
\item heavy $c$: answer yes iff $\cg{r}$ exceeds the number
of output pairs lying in $A'\times B'$.
\end{itemize}
If every candidate run is tried without a \textsc{complete} verdict,
the data structure reports failure.

\begin{algorithm}[H]
\SetKwInOut{KwIn}{Input}
\SetKwInOut{KwOut}{Output}
\KwIn{Subsets $A'\subseteq A$, $B'\subseteq B$, and a set $C'$ of targets}
\KwOut{A yes/no answer for every $c\in C'$, or \textsc{failure}}
compute $\cgj{r}{j}$ for every $j\in[J]$ and all residues $r$ (Lemma~\ref{lem:fft})\;
\For{$c\in C'$}{
  \For{every run $j$ that is clean for $c$}{
    $r\gets c\bmod p_{j}$\;
    $(\mathit{out},v)\gets$ run $j$'s $\textsc{Recover}(r)$\;
    \If{$v=\textsc{complete}$}{
      \uIf{$c\notin\Ct$}{
        answer yes for $c$ iff some $(a,b)\in\mathit{out}$ has $a+b=c$ and $(a,b)\in A'\times B'$\;
      }
      \Else{
        
        answer yes for $c$ iff $\cgj{r}{j} >|\mathit{out}\cap(A'\times B')|$\;
      }
      \Break\;
    }
  }
  \lIf{no clean run returned \textsc{complete}}{\Return{\textsc{failure}}}
}
\caption{Query$(A',B',C')$}
\label{alg:query}
\end{algorithm}

\section{Analysis}\label{sec:analysis}

\subsection{Preprocessing}\label{sec:anpre}

To apply Lemma~\ref{lem:fn} we extend $g^{(k)}_i$ to a self-map of $[2p]$ as follows.
The pairs of $A^{(k)}_i\times B$ of size $n^{2}/L\le p$ are identified with the first $n^{2}/L$ points of $[2p]$ via the lexicographic order,  and call the remaining points \emph{dummies}.
Write $(a_x,b_x)$ for the pair identified with
$x$ and set
\[
  \extendedmap^{(k)}_i(x)=
  \begin{cases}
    (a_x+b_x)\bmod p, & x<n^{2}/L \text{ and } a_x+b_x\notin\Ct,\\[2pt]
    p+(x\bmod p),     & \text{otherwise.}
  \end{cases}
\]
The first branch applies to the light-sum pairs and reproduces
$\smallmap^{(k)}_i$; the second diverts the heavy-sum pairs and the
dummies into the label range $[p,2p)$, which residue queries never
touch. Note that for every residue $r\in[p]$, the preimage of $r$ is
exactly $M_r\cap(A^{(k)}_i\times B)$, whose size is $d_{i,r}$.
Figure~\ref{fig:pad} shows which part of the domain occupies which
part of the range.

\begin{figure}[H]
\centering
\begin{tikzpicture}[font=\small]
\draw (0,2.1) rectangle (3.4,2.7);
\draw (3.4,2.1) rectangle (10.2,2.7);
\node at (1.7,2.4) {pairs\ \ $[n^{2}/L]$};
\node at (6.8,2.4) {dummies};
\draw (0,0) rectangle (5.1,0.6);
\draw (5.1,0) rectangle (10.2,0.6);
\node at (2.55,0.3) {residues $[p]$};
\node at (7.65,0.3) {labels $[p,2p)$};
\draw[->] (1.2,2.05) -- node[left,font=\footnotesize,pos=0.5]
  {light: $(a_x+b_x)\bmod p$\ } (1.2,0.65);
\draw[->] (2.7,2.05) -- node[below right,font=\footnotesize,pos=0.35]
  {otherwise: $p+(x\bmod p)$} (6.4,0.65);
\draw[->] (7.6,2.05) -- (7.9,0.65);
\end{tikzpicture}
\caption{The padded self-map $G^{(k)}_i:[2p]\to[2p]$. The
first $n^{2}/L\le p$ domain points are the pairs of
$A^{(k)}_i\times B$, the rest are dummies. Light pairs land on residues. Heavy pairs and
dummies are diverted to the label range $[p,2p)$, which residue
queries never touch. The preimage of a residue is therefore exactly
the light pairs of the corresponding class.}
\label{fig:pad}
\end{figure}

We now bound the collision probability of every map $G^{(k)}_i$ using the loads $d_{i,r}$.

\begin{claim}\label{claim:pad}
For every $i \in [L]$ and $k \in [K]$, it holds that, $q\bigl(G^{(k)}_i\bigr)\le 1/p + 
\bigr(\sum_{r \in [p]}d_{i,r}^{2} /4p^2 \bigl)$.
\end{claim}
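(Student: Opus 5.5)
The plan is to compute the collision probability directly from the fiber sizes of $G=G^{(k)}_i$. For a self-map on a domain of size $N=2p$ with uniformly and independently drawn $x,x'$,
\[
q(G)=\sum_{y\in[2p]}\Pr[G(x)=y]^2=\frac{1}{4p^2}\sum_{y\in[2p]}|G^{-1}(y)|^2 .
\]
I will split this sum into the residue range $[p]$ and the label range $[p,2p)$, and bound the two parts separately.

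\textbf{Residue range.} For $y=r\in[p]$, the preimage $G^{-1}(r)$ is exactly $M_r\cap(A^{(k)}_i\times B)$, as noted after the definition of $G^{(k)}_i$. Its size is therefore $d_{i,r}$. This part of the sum contributes exactly $\sum_{r\in[p]}d_{i,r}^2/4p^2$, which is the second term in the claim.

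\textbf{Label range.} For $y=p+s$ with $s\in[p]$, the preimage consists of those non-light domain points $x\in[2p]$ (heavy-sum pairs or dummies) with $x\bmod p=s$. Every $x\in[2p]$ with $x\equiv s\pmod p$ is either $s$ or $s+p$, so each label fiber has size at most $2$. The label fibers are disjoint and together contain at most $2p$ points, so I will use the bound $\sum_{s}|G^{-1}(p+s)|^2\le 2\sum_s|G^{-1}(p+s)|\le 4p$. Dividing by $4p^2$ gives a contribution of at most $1/p$, which is the first term in the claim.

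Adding the two parts gives the claim. The only step needing care is the label-range bound, specifically the observation that reducing modulo $p$ on $[2p]$ is at most two-to-one. The rest is a direct calculation.
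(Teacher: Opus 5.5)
Your proof is correct and follows the paper's argument. Both expand $q(G^{(k)}_i)=\sum_{y\in[2p]}\bigl(|G^{-1}(y)|/2p\bigr)^2$, identify the residue fibers with the loads $d_{i,r}$, and bound each label fiber by $2$ to obtain the $1/p$ term; your observation that a label fiber lies in $\{s,s+p\}$ is a slightly more explicit version of the paper's ``one dummy and one heavy'' remark.
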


\begin{proof}
Denote the maps $\extendedmap^{(k)}_i$ by $f$, then,
\[
q(f) = \sum_{r \in [2p]} \Pr[f(x)=r]  \Pr[f(x')=r] = \sum_{r \in [2p]} \Big(\frac{|f^{-1}(r)|}{2p}\Big)^2.
\]
The claim is obtained as for $r \in [p]$, $f^{-1}(r)$ is exactly $d_{i,r}$ and for $r \not \in [p]$, $f^{-1}(r)$ has at most two preimages (one dummy and one heavy).
\end{proof}

The next lemma controls the loads, and with them the certification check (line~\ref{step:certification} of Algorithm~\ref{alg:preprocess}).

\begin{lemma}\label{lem:loadexp}
For each partition,
$\mathbb{E}\bigl[\sum_{i,r}d_{i,r}^{2}\bigr] \leq 2n^2$, the
expectation is over the prime and the partition jointly.
\end{lemma}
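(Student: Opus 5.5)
We expand $\sum_{i,r} d_{i,r}^2$ as a count of ordered pairs of light-sum pairs that land in the same part $A_i$ and have congruent sums, and bound the expectation term by term. Concretely,
\[
\sum_{i,r} d_{i,r}^{2} \;=\; \sum_{\substack{(a,b),(a',b')\in A\times B\\ a+b,\,a'+b'\notin\Ct}} \mathbf{1}\bigl[\,a,a'\text{ in the same part}\,\bigr]\cdot\mathbf{1}\bigl[\,a+b\equiv_p a'+b'\,\bigr].
\]
We split this into three kinds of terms: (i) $a=a'$, (ii) $a\neq a'$ with $a+b=a'+b'$, and (iii) $a\ne a'$ with $a+b\neq a'+b'$. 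The case $a=a'$, $a+b\ne a'+b'$ is absorbed into (i).

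For (i), the same-part indicator is $1$. The pair is $(a,b,b')$ with $b\equiv_p b'$. The terms with $b=b'$ contribute at most $n^2$. For $b\neq b'$, Lemma~\ref{lem:fp} gives $\Pr[b\equiv_p b']=O(\log n/n^{2-\delta})$, so these contribute at most $n^3\cdot O(\log n/n^{2-\delta})=O(n^{1+\delta}\log n)$. This is $o(n^2)$ since $\delta<1$.

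For (ii), the congruence holds deterministically. For $a\ne a'$, balanced partitioning gives $\Pr[\text{same part}]=(n/L-1)/(n-1)\le 1/L$. The sum $a+b=c$ is light, so for a fixed ordered first pair there are at most $\cnt(c)<n^{\delta}\log^2 n$ choices of the second pair. The total is therefore at most $n^2\cdot n^{\delta}\log^2 n/L\le n^2/6$, using $L\ge 6n^{\delta}\log^2 n$.

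For (iii), the sums differ, and both are bounded by $n^{O(1)}$. The prime and the partition are independent, so each term has probability at most $\frac1L\cdot O(\log n/n^{2-\delta})$. There are at most $n^4$ terms, giving $O(n^{4}\log n/(L\,n^{2-\delta}))=O(n^{2}/\log n)$ after substituting $L=\Theta(n^{\delta}\log^2 n)$.

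This step is the main obstacle. The constant hidden in Lemma~\ref{lem:fp} has to be beaten by the extra $\log n$ in $L$, which holds only for $n$ sufficiently large. I would state this explicitly, which is harmless since the target is just $2n^2$.

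Summing, the expectation is at most $n^2+n^2/6+o(n^2)\le 2n^2$ for large $n$. Only the exact-sum collisions in (ii) use lightness. Both the heavy pairs and the dummies are excluded from the $d_{i,r}$, so no further care is needed.
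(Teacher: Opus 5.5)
Your proof is correct and follows the paper's argument. Both expand $\sum_{i,r}d_{i,r}^2$ as a count of ordered pairs, split by whether $a=a'$ and whether the exact sums coincide (your case (i) merges the paper's cases 1 and 3), and bound each term with the same-part probability and Lemma~\ref{lem:fp}. Your sharper same-part bound $1/L$ (the paper uses $2/L$) and your explicit ``for sufficiently large $n$'' caveat are minor refinements; the paper's $O(\cdot)\le n^2/3$ steps implicitly need the same caveat.
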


\begin{proof}
$\sum_{i,r}d_{i,r}^2$ is the number of ordered pairs $(a,b),(a',b')$ such that $a,a'$ belong to the same set, $a+b\equiv_{p}a'+b'$ and that sum is light. 
We use the following facts.
\begin{enumerate}[itemsep=1pt,label=(\alph*)]
\item If $a \neq a'$ then the probability both $a,a'$ belong to the same set is at most $2/L$.
\item If $a+b \neq a'+b'$ then the probability both are congruent mod $p$ is $O(\log n/n^{2-\delta})$ (Lemma~\ref{lem:fp}).
\end{enumerate}
There are four cases.
\begin{enumerate}[itemsep=1pt]
\item $a=a'$ and $a+b=a'+b'$: then $b=b'$, there are at most  $n^{2}$ such ordered pairs.
\item $a\neq a'$ and $a+b=a'+b'$: The probability is at most $2/L$ by
(a). The two pairs sum to a common light value $s$, and
for fixed $s$ the number of ordered pairs is $\cnt(s)^{2}$. 
Recall for light $s$, $\cnt(s) \leq n^\delta \log^{2}n$, the contribution is at most
\[
(2/L)\cdot\sum_{s\ \mathrm{light}}\cnt(s)^{2}\le 
(2/L) \cdot n^\delta \log^{2}n \cdot\sum_{s\ \mathrm{light}}\cnt(s)\le 
2n^{2+\delta}\log^{2}n/L \leq n^2/3.
\]
\item $a=a'$ and $a+b\ne a'+b'$: hence $b\equiv_{p}b'$ which occurs with probability $O(\log n/n^{2-\delta})$ by (b).
Such an ordered pair is
determined by a triple $(a,b,b')$, of which there are at most $n^{3}$,
so the contribution is $O(n^{1+\delta}\log n) \le n^2/3$.
\item $a\ne a'$ and $a+b\ne a'+b'$: 
The probability is $O(\log n/n^{2-\delta})\cdot 2/L$ by (a) and (b). There are at most $n^{4}$ ordered pairs, so the contribution is
$O(n^{2+\delta}\log n)/L \leq n^2 /3$.
\end{enumerate}
Summing the four cases, the total is at most $2n^2$.
\end{proof}

By Lemma~\ref{lem:loadexp} and Markov's inequality, a fixed partition fails the certification check (line~\ref{step:certification} of Algorithm~\ref{alg:preprocess}) with probability at most $1/\log^{2}n$. A union bound over the $K$ partitions
makes each attempt pass with probability $1-O(1/\log n)$, 
hence after $\log n$ attempts the loop succeed w.h.p.
We next bound the number of clean runs for each target.

\begin{lemma}\label{lem:clean}
With high probability, every possible target has at least $J/2$ clean runs.
\end{lemma}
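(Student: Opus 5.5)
Fix a target $c$, an integer bounded by $n^{O(1)}$. We do not need every integer as a target. A run is clean for $c$ exactly when no heavy value $c'\in\Ct$ with $c'\neq c$ satisfies $c'\equiv_{p_j} c$. This condition depends on $c$ only through its residue modulo $p_j$. Over all $J$ primes, the number of distinct patterns of residues $(c\bmod p_j)_j$ is infinite in principle, so the union bound has to be organized differently. The simpler route is to split targets into two kinds.

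\emph{Targets of the form $c=c'$ with $c'\in\Ct$, or any $c$ that collides with some heavy value.} A target can fail to be clean in run $j$ only if $c\equiv_{p_j} c'$ for some $c'\in\Ct\setminus\{c\}$, which means $p_j$ divides $c-c'$. For the union bound, note that $c$ is unclean in run $j$ iff the residue $c\bmod p_j$ equals the residue of some other heavy value. So the set of unclean targets in run $j$ is determined by the residue classes of $\Ct$ modulo $p_j$.

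\emph{Single fixed target.} For fixed $c$, by Lemma~\ref{lem:fp} and a union bound over $c'\in\Ct$, run $j$ is unclean with probability at most $|\Ct|\cdot O(\log n/n^{2-\delta})$. Since $\sum_s\cnt(s)=n^2$, we have $|\Ct|\le n^{2-\delta}/\log^2 n$, so this probability is $O(1/\log n)$, in particular at most $1/8$ for large $n$. The runs use independent primes. The redraw loop of the certification check depends only on the load condition, so I would argue that it only conditions each run on an event of probability $1-o(1)$, inflating the bound by a constant factor. A Chernoff bound then shows that fewer than $J/2$ clean runs occur with probability $e^{-\Omega(J)}=n^{-\Omega(1)}$, with the exponent controlled by the constant in $J=\Theta(\log n)$.

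\emph{From a fixed target to all targets.} This is the main obstacle, since "every possible target" ranges over unboundedly many integers. Two observations handle it:
\begin{itemize}
\item Every heavy target is one of the at most $n^2$ elements of $\Ct$.
\item For a light target $c$, run $j$ is unclean iff $c\bmod p_j\in H_j:=\{c'\bmod p_j: c'\in\Ct\}$. Whether run $j$ is clean for $c$ therefore depends only on whether $c$ lies in the set $U_j=\bigcup_{c'\in\Ct}(c'+p_j\mathbb{Z})$.
\end{itemize}
Since inputs are polynomially bounded, all relevant targets (those that can possibly be answered yes) satisfy $|c|\le n^{O(1)}$. A target outside $A+B$ is in fact answerable trivially, but the statement concerns targets with polynomially bounded magnitude, per the word-RAM assumption in Section~\ref{sec:prelim}. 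I would therefore restrict to integers $c$ with $|c|\le n^{O(1)}$, a set of size $n^{O(1)}$. Choosing the constant in $J$ large enough makes the per-target failure probability $n^{-\Omega(1)}$ with a large enough exponent to beat this polynomial count, and a union bound concludes.

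If one insists on arbitrary integers, I would instead note that cleanliness for $c$ depends only on the vector $(c\bmod p_j)_j$. I would then bound the failure probability directly per vector of residues, but I expect the polynomial-range restriction to be what the paper intends.
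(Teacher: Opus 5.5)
Your proof is correct and follows the paper's argument. You fix $c$, union-bound Lemma~\ref{lem:fp} over $\Ct$ using $|\Ct|\le n^{2-\delta}/\log^{2}n$ to get per-run unclean probability $O(1/\log n)$, apply Chernoff over the independent runs, and union-bound over the $n^{O(1)}$ polynomially bounded targets. Your remark that the certification redraw only conditions each run on an event of probability $1-O(1/\log n)$ is a valid detail that the paper leaves implicit. The aside calling the residue patterns ``infinite'' is wrong (they are finite, though superpolynomial in number), but neither it nor the two-kinds split is needed, so both can be dropped.
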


\begin{proof}
Fix a target $c$ and a run. For each $c'\in\Ct$ with $c'\ne c$ we have
$\Pr[c'\equiv_{p}c]=O(\log n/n^{2-\delta})$ by Lemma~\ref{lem:fp}. Consequently,
by a union bound the expected number of heavy values other than $c$ in
the residue class of $c$ is at most
\[
  |\Ct|\cdot O(\log n)/n^{2-\delta}\ \le\ O(\log n)/\log^{2}n\ =\
  O(1/\log n),
\]
using $|\Ct|\le n^{2-\delta}/\log^{2}n$. The run is unclean for $c$ if
and only if this count is at least $1$, so by Markov's inequality 
each run is clean with probability at least $1-O(1/\log n)$, independently across runs. 
Hence the number of clean runs is distributed $\textit{Bin}(J,1-O(1/\log n))$. 
The proof is completed using a Chernoff bound together with a union bound over the $n^{O(1)}$ possible targets.
\end{proof}

\subsection{Recovery}\label{sec:anrec}

Throughout this subsection we fix a target $c$ and a run, and write
$r=c\bmod p$.

\begin{observation}\label{cl:cert}
Every element of $\widehat{A}$ is a first coordinate of some pair of $M_r$, and $\textsc{Recover}(r)$'s output is a subset of
$M_r$. Consequently, the verdict is \textsc{complete} if and only
if the output is exactly $M_r$.
\end{observation}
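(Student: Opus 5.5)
The plan is to prove the two containments by unwinding the verification steps built into $\textsc{Recover}(r)$ and the definition of $\extendedmap^{(k)}_i$. After that, the ``consequently'' part follows from a cardinality argument against the stored count $m(r)=|M_r|$.

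\emph{Step 1: $\widehat{A}$.} An element $a$ enters $\widehat{A}$ only when some FN structure returns a domain point $x\neq\bot$ with $\extendedmap^{(k)}_i(x)=r$. This equation is checked explicitly, so the correctness of the FN structure is never used. Since $r\in[p]$, and the second branch of $\extendedmap^{(k)}_i$ only produces values in $[p,2p)$, the point $x$ must fall in the first branch. Hence $x<n^2/L$, $x$ is identified with a pair $(a_x,b_x)\in A^{(k)}_i\times B$, $a_x+b_x\notin\Ct$, and $a_x+b_x\equiv_p r$. That is, $(a_x,b_x)\in M_r$, and $a=a_x$ is its first coordinate.

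\emph{Step 2: the output.} A pair $(a,b)$ is added to $\mathit{out}$ only if $a\in\widehat{A}\subseteq A$, $b\in B_{(r-a)\bmod p}$, and $a+b\notin\Ct$. The condition $b\in B_{(r-a)\bmod p}$ gives $b\equiv_p r-a$, so $a+b\equiv_p r$. Together these give $(a,b)\in M_r$. The output is also a set: $\widehat{A}$ is a set, each $a$ is expanded once, and the sets $B_s$ are sets of distinct elements. So no pair is counted twice, and $|\mathit{out}|$ is the cardinality of a subset of $M_r$.

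\emph{Step 3: the verdict.} The verdict \textsc{complete} is declared exactly when the expand step was not cut off by the reading cap and $|\mathit{out}|=m(r)=|M_r|$. Since $\mathit{out}\subseteq M_r$ with $M_r$ finite, equal cardinality is equivalent to $\mathit{out}=M_r$, which gives the forward direction.

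For the converse I would read ``the output is exactly $M_r$'' as referring to an execution that produced an output, meaning the cap was not hit. Then the count check passes and the verdict is \textsc{complete}. If the cap is hit, no output is declared at all, and I would remark that the equivalence is meant for completed expansions.

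The only delicate point is Step 1. It needs the observation that the diverted labels lie in $[p,2p)$ and never coincide with the residue $r\in[p]$, so a verified return cannot be a heavy-sum pair or a dummy. The rest is direct.
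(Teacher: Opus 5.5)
Your proposal is correct and follows the paper's own proof essentially step for step. Both arguments show three things in turn. First, a verified FN return must come from the first branch of $\extendedmap^{(k)}_i$ and so lies in $M_r$. Second, every emitted pair lies in $M_r$. Third, the verdict then follows by comparing $|\mathit{out}|$ with $m(r)=|M_r|$.

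You also add two remarks the paper leaves implicit, and both are sound refinements. One is that the correctness of the FN structure is never used. The other is that the converse should be read for executions not stopped by the reading cap.
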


\begin{proof}
    A point enters $\widehat{A}$ only after $\extendedmap^{(k)}_i$ is evaluated on it and the value is confirmed to be $r \in [p]$; by the definition of $\extendedmap^{(k)}_i$, its preimage at $r$ is $M_r \cap (A^{(k)}_i \times B)$. Step 2 emits $(a,b)$ only when $b \in B_{(r-a)\bmod p}$ and $a + b \notin \Ct$, i.e.  only when $(a,b) \in M_r$. The last claim follows since a subset of $M_r$ has size $m(r)$ exactly when it is all of $M_r$.
\end{proof}

So the output is complete unless some first coordinate of $M_r$ fails
to enter $\widehat{A}$, or the reading cap stops the expansion (line~\ref{step:cap} of Algorithm~\ref{alg:Recovery}). Both events are controlled using a bound on $m(r)$ that we establish first.

\begin{lemma}\label{lem:mass}
With probability $1-O(1/\log n)$ over the run's randomness,
$m(r)\le 2n^{\delta}\log^{2}n$.
\end{lemma}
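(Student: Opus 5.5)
We bound $\mathbb{E}[m(r)]$ over the random prime and then apply Markov's inequality. Here $c$ is fixed and $r=c\bmod p$, so $m(r)$ counts light-sum pairs $(a,b)\in A\times B$ with $a+b\equiv_p c$. We split these pairs by whether $a+b=c$ exactly or $a+b\neq c$.

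\emph{Exact pairs.} If $a+b=c$, the pair contributes only when $c$ is light, since $M_r$ excludes heavy sums. In that case there are $\cnt(c)<n^{\delta}\log^{2}n$ such pairs, a deterministic bound that needs no probability.

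\emph{Congruent pairs.} If $a+b\neq c$, Lemma~\ref{lem:fp} gives $\Pr[a+b\equiv_p c]=O(\log n/n^{2-\delta})$, because $p$ is uniform in $[n^{2-\delta},2n^{2-\delta})$ and all values are polynomially bounded. Summing over at most $n^{2}$ pairs, the expected number of such pairs is $O(n^{\delta}\log n)$.

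\emph{Markov step.} The congruent pairs have expectation $O(n^{\delta}\log n)$, so they exceed $n^{\delta}\log^{2}n$ with probability $O(1/\log n)$. On the complementary event,
\[
m(r)\le n^{\delta}\log^{2}n+n^{\delta}\log^{2}n=2n^{\delta}\log^{2}n .
\]

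\emph{Redraw conditioning.} The only subtle point is that the run's prime is conditioned on passing the certification check, which may trigger redraws. The accepted prime is therefore not exactly uniform. Each attempt passes with probability $1-O(1/\log n)$, so conditioning on passage inflates any event's probability by at most a factor $1+O(1/\log n)$. The bound $O(1/\log n)$ thus survives. Alternatively, a union bound over the at most $\log n$ attempts is unnecessary: we simply bound the probability of the bad event for the accepted draw by $\Pr[\text{bad}]/\Pr[\text{pass}]$. The partitions play no role here, since $m(r)$ does not depend on them.
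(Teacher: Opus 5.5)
Your proof is correct and matches the paper's argument step for step. You use the same split of $M_r$ into pairs summing exactly to $c$ (at most $\cnt(c)<n^{\delta}\log^{2}n$, and none when $c$ is heavy) and pairs merely congruent to $c$ (expected $O(n^{\delta}\log n)$ via Lemma~\ref{lem:fp}, then Markov), and your extra remark on the certification redraws is a valid detail the paper leaves implicit: the accepted draw is distributed conditionally on passing, so the bad event has probability at most $\Pr[\text{bad}]/\Pr[\text{pass}]=O(1/\log n)$.
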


\begin{proof}
Split $M_r$ according to whether a pair sums to $c$ or not. Pairs
summing to $c$ contribute at most $n^{\delta}\log^{2}n$, since they
contribute nothing when $c$ is heavy and at most
$\cnt(c)<n^{\delta}\log^{2}n$ when $c$ is light. For the remaining
pairs, a light value $s\ne c$ contributes $\cnt(s)$ pairs and only
when $s\equiv_{p}c$, which by Lemma~\ref{lem:fp} happens with
probability $O(\log n/n^{2-\delta})$. As $\sum_{s}\cnt(s)\le n^{2}$,
the expected contribution is $O(n^{\delta}\log n)$, so by Markov's
inequality it exceeds $n^{\delta}\log^{2}n$ with probability at most
$O(\log n)/\log^{2}n=O(1/\log n)$. The lemma is obtained by combining both parts.
\end{proof}

We bound the first bad event that some first coordinate of $M_r$ is not in $\widehat{A}$.

\begin{lemma}\label{lem:disc}
Condition on $m(r)\le 2n^{\delta}\log^{2}n$. Then, w.h.p., every first
coordinate occurring in $M_r$ enters $\widehat{A}$.
\end{lemma}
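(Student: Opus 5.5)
Fix a first coordinate $a$ that occurs in $M_r$, and fix one pair $(a,b)\in M_r$. The plan is to show that, with high probability, in at least one of the $K$ partitions the set $A^{(k)}_i$ containing $a$ has preimage at $r$ consisting only of pairs with first coordinate $a$. In that case every element of $(\extendedmap^{(k)}_i)^{-1}(r)=M_r\cap(A^{(k)}_i\times B)$ has first coordinate $a$. The FN structure for $\extendedmap^{(k)}_i$ (Lemma~\ref{lem:fn}, assumed to have succeeded w.h.p.\ over its construction for all $KL$ structures and all runs, by a union bound) returns some preimage $x$ of $r$, since the preimage is nonempty. The verification $\extendedmap^{(k)}_i(x)=r$ passes, so $a$ enters $\widehat{A}$. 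A union bound over the at most $m(r)\le 2n^{\delta}\log^2 n$ first coordinates then finishes the proof.

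The key step is to bound, for a single random balanced partition, the probability that $a$'s set contains another $a'\neq a$ that is a first coordinate of $M_r$. Let $D$ be the set of distinct first coordinates of $M_r$ other than $a$; then $|D|\le m(r)\le 2n^{\delta}\log^2 n$. Conditioned on the part containing $a$, each other element of $A$ lands in that part with probability $(n/L-1)/(n-1)\le 1/L$. By a union bound, the probability of a collision is at most
\[
|D|/L\le 2n^{\delta}\log^2 n/(6n^{\delta}\log^2 n)=1/3.
\]
The $K$ partitions are independent of each other, so the probability that all $K$ fail for $a$ is at most $3^{-K}=n^{-\Omega(1)}$ for a suitable $K=\Theta(\log n)$. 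A union bound over the $\le 2n^{\delta}\log^2n$ first coordinates, and then over the $n^{O(1)}$ possible residues and $J$ runs if uniformity over targets is desired, completes the argument.

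The main obstacle is the conditioning. The event $m(r)\le 2n^\delta\log^2 n$ and the set $D$ depend only on the prime $p$, while the partitions are drawn independently of $p$. So I would condition on $p$ first and use only the partitions' randomness. A second subtlety is that the certification redraw loop in Algorithm~\ref{alg:preprocess} correlates partitions with the prime. I would handle this by noting that each attempt passes with probability $1-O(1/\log n)$, so conditioning on passing inflates any event's probability by at most a constant factor, which is absorbed by the $n^{-\Omega(1)}$ bound.
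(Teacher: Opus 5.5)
Your proof is correct and follows the paper's argument. Both show that $a$ is isolated (no other first coordinate of $M_r$ lies in its part) in at least one of the $K$ independent partitions, via a union bound over the at most $m(r)$ other coordinates, and then take a union bound over first coordinates. Your sharper per-element bound $(n/L-1)/(n-1)\le 1/L$ (versus the paper's $2/L$), and your explicit handling of FN success and of the certification redraw loop, are extra care the paper glosses over, but they do not change the route.
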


\begin{proof}
Fix a first coordinate $a$ occurring in $M_r$ and a partition $k$, and denote by $i \in [L]$ the set $a$ belongs to. Say that $a$ is \emph{isolated} in $k$ if no other first coordinate occurring in $M_r$ lies in $A^{(k)}_i$. if $a$ is isolated, then, the preimage of $r$ under $G^{(k)}_i$ is
$M_r\cap(A^{(k)}_i\times B)$ and $a$ enters $\widehat{A}$.

It remains to control the probability that $a$ is isolated in some
partition. A fixed other coordinate lies in $a$'s set with probability
at most $2/L$, and by Observation~\ref{cl:cert} there are at most $|M_r| = m(r)\leq 2n^\delta \log^{2}n $ first coordinates occurring in $M_r$, so by a union bound, $a$ is isolated in a
fixed partition with probability at least
$1-4n^{\delta}\log^{2}n/L \geq 1/3$. The $K$ partitions are drawn
independently, so w.h.p $a$ is isolated in some partition. A union bound over the at most $n$ first coordinates completes the proof.
\end{proof}

We then bound the second bad event that more than $L$ entries are read.

\begin{lemma}\label{lem:aband}
In a clean run with $m(r)\le 2n^{\delta}\log^{2}n$, the expansion reads at
most $L$ entries.
\end{lemma}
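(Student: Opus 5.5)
The plan is a deterministic counting argument: bound the number of values of $b$ read during expansion by $|M_r|$ plus a small number of heavy-sum reads, and then compare with $L$. Every value read in step 2 is some $b\in B_{(r-a)\bmod p}$ with $a\in\widehat{A}$. Each such pair $(a,b)$ satisfies $a+b\equiv_p r$, so it lies in the residue class of $r$. Its true sum $s=a+b$ therefore falls into exactly one of three cases: $s$ is light, $s=c$ with $c$ heavy, or $s$ is another heavy value congruent to $c$.

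\textbf{Step 1: the clean run removes the third case.} Because the run is clean for $c$, the residue class of $r$ contains no heavy value other than possibly $c$ itself. So no value read has a sum in the third case.

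\textbf{Step 2: light sums are counted by $m(r)$.} A pair $(a,b)$ whose sum is light lies in $M_r$. Since the pairs read are distinct (each $a\in\widehat{A}$ is expanded once, and each $b$ is read once for that $a$), the number of such reads is at most $m(r)\le 2n^{\delta}\log^2 n$.

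\textbf{Step 3: heavy reads are at most one per $a$.} A pair summing to $c$ when $c\in\Ct$ has $b=c-a$, so for each $a\in\widehat{A}$ there is at most one such read. By Observation~\ref{cl:cert}, every $a\in\widehat{A}$ is a first coordinate of some pair of $M_r$. Hence $|\widehat{A}|\le m(r)$, and these reads number at most $m(r)$ as well.

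\textbf{Step 4: compare with $L$.} The total number of values read is at most $2m(r)\le 4n^{\delta}\log^2 n<L=\lceil 6n^{\delta}\log^2 n\rceil$, so the cap is never reached.

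The only delicate point is Step 3. A naive bound would charge the reads with sum $c$ against $\cnt(c)$, which is large when $c$ is heavy. The fix is to charge them per element of $\widehat{A}$ instead, which relies on the verification step in Recover guaranteeing $\widehat{A}$ consists only of first coordinates of $M_r$. The bound is deterministic given the conditions, so no probability enters.
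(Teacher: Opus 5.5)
Your proof is correct and follows the paper's argument step for step. Cleanliness forces every heavy read to sum to $c$, so there are at most $|\widehat{A}|\le m(r)$ such reads by Observation~\ref{cl:cert}, and light reads inject into $M_r$, giving a total of $2m(r)\le 4n^{\delta}\log^{2}n\le L$; your explicit note that distinct reads yield distinct pairs just spells out a detail the paper leaves implicit.
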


\begin{proof}
For $a\in\widehat{A}$ the expansion reads the set
$B_{(r-a)\bmod p}$, and every value $b$ of that set gives a pair
$(a,b)$ of residue class $r$. If $a+b$ is light then the number of values read is at most $m(r)$. 
If $a+b$ is heavy, since the run is clean for $c$ we have $a+b=c$, so there are at most $|\widehat{A}|$ such pairs. 
Since $|\widehat{A}|\le m(r)$ by Observation~\ref{cl:cert}, the expansion
reads at most $2m(r)\le 4n^{\delta}\log^{2}n\le L$ values.
\end{proof}

\subsection{Query}\label{sec:anquery}
We first show that an answer, once produced, is correct. Recall that
Algorithm~\ref{alg:query} produces an answer for $c$ only from a run that is
clean for $c$ and whose $\textsc{Reconstruct}(r)$ returned \textsc{complete}.

\begin{claim}\label{cl:ans}
Every answer produced from a \textsc{complete} execution is correct.
\end{claim}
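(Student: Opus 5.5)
By Observation~\ref{cl:cert}, a \textsc{complete} verdict certifies that the output $\mathit{out}$ equals $M_r$ exactly, where $r=c\bmod p$ in the run used. The answer is also produced only from a run that is clean for $c$. The plan is to split into the two cases of Algorithm~\ref{alg:query}, light and heavy $c$, and in each case show that the decision rule is equivalent to $\cnt_{A'\times B'}(c)>0$. Both cases rely only on exact set identities, with no probabilistic argument. Correctness therefore holds deterministically for every query, which is what the "forall" guarantee needs.

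\emph{Light $c$.} Every pair $(a,b)\in A'\times B'$ with $a+b=c$ satisfies two conditions. First, $a+b\equiv_p r$. Second, $a+b=c\notin\Ct$. Hence the pair lies in $M_r=\mathit{out}$. Conversely, any pair of $\mathit{out}$ that lies in $A'\times B'$ and sums to $c$ is a witness. So the rule "some $(a,b)\in\mathit{out}\cap(A'\times B')$ has $a+b=c$" holds iff $\cnt_{A'\times B'}(c)>0$. Cleanness is not needed here.

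\emph{Heavy $c$.} I will partition the pairs counted by $\cg{r}$ according to their true sum $s$, which satisfies $s\equiv_p c$, as in \eqref{eq:cnt}. There are three kinds of pairs:
\begin{itemize}[itemsep=1pt]
\item pairs with $s=c$, numbering $\cnt_{A'\times B'}(c)$;
\item pairs with $s\in\Ct\setminus\{c\}$;
\item pairs with $s\notin\Ct$.
\end{itemize}
Since the run is clean for $c$, we have $h(r)=1$, so the residue class of $r$ contains no heavy value other than $c$. Hence the middle kind is empty. The pairs of the last kind are exactly $M_r\cap(A'\times B')=\mathit{out}\cap(A'\times B')$. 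Therefore
\[
\cg{r}-|\mathit{out}\cap(A'\times B')|=\cnt_{A'\times B'}(c),
\]
and the rule "$\cg{r}>|\mathit{out}\cap(A'\times B')|$" is equivalent to the correct answer.

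The only point needing care is the middle term in the heavy case. This requires that cleanness, defined via $h$ over all heavy values of $A+B$, also rules out pairs of $A'\times B'$ with a different heavy sum. It does, because every such sum lies in $\Ct$ and is congruent to $c$. I would also note explicitly that $\cg{r}$ is computed for the same prime $p_j$ as the run used.
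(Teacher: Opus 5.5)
Your proposal is correct and follows essentially the same argument as the paper: in the light case every witness in $A'\times B'$ lies in $M_r=\mathit{out}$ by Observation~\ref{cl:cert}, and in the heavy case you use the same three-way split of $\cg{r}$ by exact sum. There, the ``other heavy sums'' term vanishes by cleanness and the light-sum term equals $|\mathit{out}\cap(A'\times B')|$; your extra remarks (the converse direction in the light case, and that cleanness is not needed there) are accurate refinements rather than a different route.
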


\begin{proof}
Fix a target $c$ and a clean run from which an answer is produced, and write
$r=c\bmod p$.

\emph{Light $c$.} Every pair summing to $c$ belongs to $M_r$. By Observation~\ref{cl:cert} the
output is all of $M_r$, so the filter ``$a+b=c$ and
$(a,b)\in A'\times B'$'' examines every relevant pair.

\emph{Heavy $c$.} $\cg{r}$ counts all pairs of
$A'\times B'$ whose sum is equivalent to $r$ modulo $p$. Note that the run is clean. Splitting by the exact sum:
\[
 \cg{r}
 =
 \underbrace{\cnt_{A'\times B'}(c)}_{\text{queried}}
 +
 \underbrace{\bigl|M_r\cap(A'\times B')\bigr|}_{\text{exact by
 Observation~\ref{cl:cert}}}
 +
 \underbrace{\bigl|\{\text{pairs with \emph{other heavy}
 sums}\}\bigr|}_{=\,0\ (\text{clean run})} .
\]
Pairs summing to $c$ itself are not in $M_r$ (their sum is heavy), so
they are never subtracted. Hence comparing $\cg{r}$ with the last term decides
$\cnt_{A'\times B'}(c)>0$ exactly.
\end{proof}

It remains to show that \textsc{complete} executions occur.

\begin{lemma}\label{lem:enum}
Fix a target $c$ and a run that is clean for $c$ and passed
certification. With probability $1-O(1/\log n)$ over the randomness of
that run, $\textsc{Recover}(r)$ returns \textsc{complete}.
\end{lemma}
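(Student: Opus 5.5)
By Observation~\ref{cl:cert}, the verdict is \textsc{complete} exactly when two things happen: every first coordinate occurring in $M_r$ enters $\widehat{A}$, and the expansion never hits the reading cap. If both hold, the expansion reads $B_{(r-a)\bmod p}$ in full for every such $a$. It therefore emits every pair of $M_r$ and nothing else, so the output size equals $m(r)$. The plan is to show that each of these two conditions fails only with small probability over the run's randomness, and then take a union bound.

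\emph{Step 1: bound the mass.} Apply Lemma~\ref{lem:mass}. With probability $1-O(1/\log n)$ over the prime $p$ of this run, $m(r)\le 2n^{\delta}\log^{2}n$. This is the only step that costs $O(1/\log n)$, and it accounts for the failure probability in the statement.

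\emph{Step 2: discovery.} Condition on $m(r)\le 2n^{\delta}\log^{2}n$ and apply Lemma~\ref{lem:disc}. With high probability over the partitions, every first coordinate of $M_r$ is isolated in some partition $k$. There the FN structure of $G^{(k)}_i$ is queried at $r$, and $r$ has a nonempty preimage. By Lemma~\ref{lem:fn}, and w.h.p.\ over the FN construction randomness, that structure returns some preimage. Every preimage of $r$ in $A^{(k)}_i\times B$ has first coordinate $a$, since $a$ is isolated. So $a$ enters $\widehat{A}$ after verification. To cover all $KL$ structures at once, union bound their failure probabilities; this is polynomially many w.h.p.\ events.

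\emph{Step 3: the cap.} The run is clean by hypothesis, and we have conditioned on the mass bound. Lemma~\ref{lem:aband} then applies directly: the expansion reads at most $L$ values, so line~\ref{step:cap} never triggers.

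\emph{Combining and the main obstacle.} The failure probability is $O(1/\log n)$ from Step 1 plus $n^{-\Omega(1)}$ from Step 2. The delicate point is the conditioning in Step 2. Passing certification is an event on the same randomness, namely the prime and the partitions. So I must check that conditioning on certification, and on the mass bound, does not distort the isolation probabilities by more than a constant factor. I would argue this by noting that certification succeeds with probability $1-O(1/\log n)$. Conditioning on an event of probability at least $1/2$ at most doubles any failure probability, and in particular preserves the w.h.p.\ bounds and the $O(1/\log n)$ bound. The mass bound depends only on $p$, while isolation depends only on the partitions given $M_r$, so the two steps compose cleanly.
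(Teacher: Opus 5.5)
Your proposal is correct and follows the paper's proof essentially step for step: Lemma~\ref{lem:mass} supplies the $O(1/\log n)$ failure term, Lemma~\ref{lem:disc} makes every first coordinate of $M_r$ enter $\widehat{A}$ w.h.p., Lemma~\ref{lem:aband} rules out the reading cap, and Observation~\ref{cl:cert} turns $M_r\subseteq\mathit{out}$ into a \textsc{complete} verdict. You also argue that conditioning on certification (and, by the same argument, on cleanliness) costs at most a constant factor, because these events have probability $1-O(1/\log n)$; this is a valid point that the paper leaves implicit.
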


\begin{proof}
By Lemma~\ref{lem:mass} we have $m(r)\le 2n^{\delta}\log^{2}n$ except
with probability $O(1/\log n)$, and given that, by
Lemma~\ref{lem:disc}, w.h.p.\ every first coordinate occurring in
$M_r$ enters $\widehat{A}$. Assume both events hold which occur with probability $1-O(1/\log n)$.

We claim that every $(a,b)\in M_r$ is emitted. As $a+b\equiv_{p}r$ we
have $b\in B_{(r-a)\bmod p}$, and $a\in\widehat{A}$, so the expansion
reads that set. By Lemma~\ref{lem:aband} the cap does not stop it
before the set is read in full, since the run is clean for $c$; thus
$b$ is read, and as $a+b\notin\Ct$ the pair is emitted. Hence $M_r\subseteq\mathit{out}$, and by
Observation~\ref{cl:cert} we get $\mathit{out}=M_r$, so
$|\mathit{out}|=m(r)$ and the verdict is \textsc{complete}.
\end{proof}

\begin{corollary}\label{cor:enum}
With high probability, every possible target obtains a \textsc{complete} execution.
\end{corollary}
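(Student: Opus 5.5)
The plan is to combine Lemma~\ref{lem:clean} and Lemma~\ref{lem:enum} through independence across runs, then union bound over targets. Fix a possible target $c$; since all inputs and targets are bounded by $n^{O(1)}$, there are $n^{O(1)}$ of them. Each run $j$ draws its own prime $p_j$ and its own partitions, independently of the other runs. For each run, consider the event $E_j$ that run $j$ is clean for $c$ and $\textsc{Recover}(c \bmod p_j)$ in run $j$ returns \textsc{complete}. We show that $\Pr[E_j]\ge 1-O(1/\log n)$, and that the $E_j$ are independent across $j$.

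For a single run, the run is unclean with probability $O(1/\log n)$, as in the proof of Lemma~\ref{lem:clean}. Conditioned on the run being clean and certified, Lemma~\ref{lem:enum} gives \textsc{complete} with probability $1-O(1/\log n)$. Combining the two, $\Pr[E_j]\ge 1-O(1/\log n)$.

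The delicate point is the conditioning on certification. The certification redraw loop changes the distribution of the run: the accepted run is drawn from the original distribution conditioned on passing the check. Since the check passes with probability $1-O(1/\log n)$, conditioning inflates any bad-event probability by at most a factor of $1/(1-O(1/\log n))=1+o(1)$. So the bounds from Lemma~\ref{lem:mass}, Lemma~\ref{lem:disc} and the cleanliness bound, which were computed over the unconditioned prime and partitions, still hold up to constants. The failure case, where the loop exhausts its $\log n$ attempts, is already excluded by the w.h.p.\ success of preprocessing.

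Independence across runs holds because the redraw loops of distinct runs use disjoint randomness. Hence the number of runs for which $E_j$ holds stochastically dominates $\mathrm{Bin}(J,1-O(1/\log n))$. With $J=C\log n$ for a large enough constant $C$, a Chernoff bound gives
\[
\Pr[\text{no } E_j \text{ holds}]\le O(1/\log n)^{J}=n^{-\omega(1)},
\]
or, more conservatively, at least $J/2$ runs succeed with probability $1-n^{-\Omega(C)}$. Whenever some $E_j$ holds, the query loop in Algorithm~\ref{alg:query} tries every clean run, so it reaches a \textsc{complete} execution.

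A union bound over the $n^{O(1)}$ possible targets completes the proof, provided $C$ is chosen large enough relative to the polynomial bound on the number of targets.

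The main obstacle is that Lemma~\ref{lem:disc} is stated only ``w.h.p.''. Its failure probability must be small enough to keep the per-run total at $O(1/\log n)$. This holds because $K=\Theta(\log n)$ with a sufficiently large constant makes the failure probability $(2/3)^K\cdot n=n^{-\Omega(1)}$, which is negligible.
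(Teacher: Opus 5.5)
Your proposal is correct and follows essentially the same route as the paper: per-run success with probability $1-O(1/\log n)$ (cleanliness as in Lemma~\ref{lem:clean} plus Lemma~\ref{lem:enum}), independence across the $J=\Theta(\log n)$ runs, and a union bound over the $n^{O(1)}$ possible targets. Two of your choices tighten the paper's terse argument without changing it: folding cleanliness into the single per-run event $E_j$, rather than first securing $J/2$ clean runs, and bounding the effect of the certification redraw by the factor $1/\Pr[\text{pass}]=1+o(1)$.
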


\begin{proof}
Fix a target $c$. W.h.p.\ at least $J/2$ runs are clean for $c$
(Lemma~\ref{lem:clean}). The runs are tried independently, so Lemma~\ref{lem:enum} together with a union bound over the
$n^{O(1)}$ possible targets completes the proof.
\end{proof}

\section{Proof of Theorem~\ref{thm:main}}\label{sec:proof}

\subsection{The time-space tradeoff}\label{sec:space}

\paragraph{Space complexity.}

The heavy set $\Ct$ stored globally, taking $O(n^{2-\delta})$ space.

Each run stores a residue array of $B$, a pair-count table, a
heavy-count table each take $O(n^{2-\delta})$ space.
By Lemma~\ref{lem:fn} and Claim~\ref{claim:pad},
the structure for $G^{(k)}_i$ has size 
\[
S_i=\Ot\Bigl(\sqrt{p\textstyle\sum_r d_{i,r}^{2}/T}
  \;+\;p/\sqrt{T}\Bigr).
\]
The run passed certification, so $\sum_{i}\sum_r d_{i,r}^{2}\le
2n^{2}\log^{2}n$,
furthermore, we have the inequality $\sum_{i\le L}\sqrt{x_i}\le\sqrt{L\sum_i x_i}$ (following Cauchy--Schwarz).
Recall $T=n^t$, we can then write
\[
  \sum_{i\in[L]} S_i\ \le\ \Ot\Bigl(\sqrt{Ln^{2-\delta}\sum_{i,r}d_{i,r}^{2}/T}\Bigr) = \Ot \bigr( \frac{n^2}{\sqrt{T}} \bigl)
  = \Ot \bigr( n^{2-t/2}  \bigl).
\]
The total space complexity is  $\Ot \bigr( n^{\max (2-\delta,2-t/2)}  \bigl)$.

\paragraph{Query time complexity.}
The query time is deterministic and so the bound below holds in worst case.
Computing the counts $\cg{r}$ takes $\Ot(p)$ time
per run by Lemma~\ref{lem:fft}, and $\Ot(p)$ over the $J$ runs. Each
of the $O(n)$ targets then executes $\textsc{Recover}$ at most $J$
times, and one execution performs $KL$ inverter calls at $\Ot(T)$
each, reads at most $L$ values, and makes one comparison against
$m(r)$, for $\Ot(KLT)=\Ot(n^{\delta+t})$ time. Answering a target
from a \textsc{complete} execution costs additional $O(|\mathit{out}|)=O(L)$
time. The total time complexity is $\Ot(n^{2-\delta}+n^{1+\delta+t})$.

For $\eps\ge 1/4$, set $\delta=\tfrac16+\tfrac\eps3$ and $t=2\delta$: the two space terms balance, and the space is $\Ot(n^{11/6-\eps/3})$. For
$\eps\le 1/4$, set $\delta=\tfrac12-\eps$ and $t=2\eps$, the space
is $\Ot(n^{\max(3/2+\eps,\,2-\eps)})=\Ot(n^{2-\eps})$.

\subsection{Failure and adaptive queries}\label{sec:adaptive}

The preprocessing succeeds w.h.p following Section~\ref{sec:anpre}, Lemma~\ref{lem:fn} and Corollary~\ref{cor:enum}.
On success, the data structure behaves deterministically: every
inverter answer is a fixed function of its query, so the verdict and
output of every execution are fixed functions of the stored state and
the target, with $A'$ and $B'$ entering only the final membership
tests. No possible target lacks a \textsc{complete} execution, so
every query is answered; every answer is correct
(Observation~\ref{cl:cert} and Claim~\ref{cl:ans}); and the running
time is bounded in the worst case (Section~\ref{sec:space}). These
guarantees hold for all queries simultaneously, hence are unaffected
by how the queries are chosen, and in particular by choosing them
adaptively as a function of earlier answers and running times. This
completes the proof of Theorem~\ref{thm:main}.

\section*{Acknowledgments}

All results, definitions, and proof strategies originate with the authors,
who used Claude (Anthropic) and ChatGPT (OpenAI) as a writing aid, both to
improve the clarity of author-written text and to write up proof details from
author-provided ideas and proof sketches. No proof was adopted as generated: every proof reached its final form through
the authors, who are
fully responsible for the integrity, accuracy, and originality of the paper.

\bibliographystyle{alphaurl}
\bibliography{references}

\end{document}